\documentclass{article}
\usepackage{geometry}
\geometry{margin=1in}

\usepackage{amsmath,amssymb,amsthm} 
\usepackage[colorlinks,linkcolor=red,anchorcolor=blue,citecolor=green]{hyperref}
\usepackage{tikz}
\usetikzlibrary {arrows.meta,automata,positioning}

\usepackage{enumitem}
\setlist[enumerate]{label*=(\arabic*)}

\usepackage[capitalise]{cleveref}
\Crefname{enumi}{}{}
\crefname{enumi}{}{}

\usepackage[all]{xy}

\newtheorem{theorem}{Theorem}

\newtheorem{example}[theorem]{Example}
\newtheorem{lemma}[theorem]{Lemma}

\newtheorem{remark}[theorem]{Remark}

\newtheorem{proposition}[theorem]{Proposition}
\newtheorem{claim}[theorem]{Claim}

\DeclareMathOperator{\MCA}{\mathcal{A}}                               
\DeclareMathOperator{\MBZ}{\mathbb{Z}}                               
\DeclareMathOperator{\dupl}{dupl}                               
\DeclareMathOperator{\excl}{excl}                               
\DeclareMathOperator{\Pow}{\mathcal{P}}                               

\date{} 

\makeatletter %

\def\theenumi{\@arabic\c@enumi}
\makeatother

\begin{document}

\title{Around Don's conjecture for binary completely reachable automata}
\author{Yinfeng Zhu \thanks{Institute of Natural Sciences and Mathematics,
Ural Federal University, 620000 Ekaterinburg, Russia} \\ Yinfeng.Zhu7{@}gmail.com }

\maketitle

\begin{abstract}
	A word $w$ is called a \emph{reaching} word of a subset $S$ of states in a deterministic finite automaton (DFA) if $S$ is the image of $Q$ under the action of $w$.
	A DFA is called \emph{completely reachable} if every non-empty subset of the state set has a reaching word.
	A conjecture states that in every $n$-state completely reachable DFA, for every $k$-element subset of states, there exists a reaching word of length at most $n(n-k)$. 
	We present infinitely many completely reachable DFAs with two letters that violate this conjecture.
	A subfamily of completely reachable DFAs with two letters, is called \emph{standardized} DFAs, introduced by Casas and Volkov (2023).
	We prove that every $k$-element subset of states in an $n$-state standardized DFA has a reaching word of length $\le n(n-k) + n - 1$. 
	Finally, we confirm the conjecture for standardized DFAs with additional properties, thus generalizing a result of Casas and Volkov (2023).
\end{abstract}

\section{Introduction}

\subsection{Completely reachable automata and Don's conjecture}

An \emph{deterministic finite automaton} (DFA) is a triple $\MCA = (Q,\Sigma,\delta)$ where $Q$ and $\Sigma$ are finite non-empty sets and $\delta \colon Q \times \Sigma \to Q$ is a function. The elements of $Q$ are called \emph{states}; the elements of $\Sigma$ are called \emph{letters}; and the function $\delta$ is called the \emph{transition function} of $\MCA$.
Finite sequences over $\Sigma$ (including the empty sequence denoted by $\epsilon$) are called \emph{words}. Write $\Sigma^*$ for the set of all words over $\Sigma$. For a word $w \in \Sigma^*$, the \emph{length} of $w$ is denoted by $|w|$.

The transition function $\delta$ expands to a function $Q \times \Sigma^* \to Q$ (still denoted by $\delta$) via the recursion: for each $q \in Q$, $a \in \Sigma$, $w \in \Sigma^*$, set $\delta(q,\epsilon) = q$ and $\delta(q,wa) = \delta(\delta(q,w))$. 
The power set of $Q$ is denoted by $\Pow(Q)$. 
The transition function $\delta$ can be further expanded to a function $\Pow(Q)  \times \Sigma^* \to \Pow(Q)$ (denoted by $\delta$ again) by setting $\delta(P,w) = \{\delta(p,w) : p \in P\}$, for all subset $P \subseteq Q$ and $w \in \Sigma^*$.
The \emph{pre-image} of $P \subseteq Q$ via a word $w\in \Sigma$, denoted by $\delta^{-1}(P,w)$, is the set $\{q : \delta(q,w) \in P\}$. It does not necessarily hold that $\delta(\delta^{-1}(P,w)) = P$.

A subset $P \subseteq Q$ is called \emph{reachable} in $\MCA$ if there exists a word $w \in \Sigma^*$ such that $P = \delta(Q,w)$. For a reachable subset $P \subseteq Q$, a word $w$ is called a \emph{reaching} word of $P$, if $P = \delta(Q,w)$.
A DFA is called \emph{completely reachable} if every non-empty subset of its state set is reachable. 

The notion of complete reachability was first introduced by Bondar and Volkov  \cite{BV16}. 
This concept appears in the study of the descriptional complexity of formal languages \cite{Mas12} and is closely related to the \v{C}ern\/{y} conjecture. The famous conjecture of \v{C}ern\/{y} states that for an $n$-state DFA $(Q,\Sigma,\delta)$, if a singleton subset of $Q$ is reachable, then there exists a word $w$ of length at most $(n-1)^2$ such that $|\delta(Q,w)| = 1$. We refer to the chapter~\cite{KV21} of the ``Handbook of Automata Theory''
and the recent survey \cite{Vol22} for a summary of the state-of-the art around the \v{C}ern\'{y} conjecture. Henk Don \cite[Conjecture 18]{Don16} proposed a stronger conjecture: for an $n$-state DFA $(Q,\Sigma,\delta)$, if a $k$-element subset of $Q$ is reachable, then it can be reached with a word of length at most $n(n-k)$.
Don's conjecture was disproved in \cite[Proposition 7]{GJ19}: for every $n \ge 6$, there exist a DFA $(Q,\Sigma,\delta)$ and a subset $S \subseteq Q$ such that the length of any word that reaches $S$ is at least $\frac{2^n}{n}$. However, it is interesting to ask whether Don's conjecture hold for completely reachable DFAs. This problem was explicitly asked in \cite[Problem 4]{GJ19}.
 Ferens and Szyku{\l}a \cite[Theorem 30]{FS23} prove that for a completely reachable DFA with $n$ states, every subset with $k$ states, $0<k\le n$, can be reached in a word of length at most $2n(n-k)$. Casas and Volkov \cite[Theorem 1]{CV23} verify Don's conjecture for a subfamily of completely reachable DFAs with two letters.

\subsection{Binary completely reachable DFA and main results}

A DFA is called \emph{circular} if the action of one letter is a cyclic permutation of all states.
DFAs with two letters are called \emph{binary}. For $n \ge 3$, a binary completely reachable $n$-state DFA is circular, see \cite[Proposition 1]{CV22}.
In this article, for a circular DFA, we use $b$ to denote the letter acting as a cyclic permutation and use $a$ to denote the other letter.
We will assume that all $n$-state circular DFA have the state set $\mathbb{Z}_n$, which is the set of all residues modulo $n$, and the letter $b$ acts on $\mathbb{Z}_n$ as $\delta(q,b) = q \oplus 1$ for each $q \in \mathbb{Z}_n$,  where $\oplus$ stands for addition modulo $n$.

For a DFA $\MCA = (Q,\Sigma,\delta)$ and a word $w \in \Sigma^*$, we define the \emph{excluded set} of $w$ as
\[
	\excl(w) := \{q: |\delta^{-1}(q,w)| = 0\}
\]
and the \emph{duplicate set} of $w$ as
\[
	\dupl(w) := \{q : |\delta^{-1}(q,w)| > 1\}.
\]
For a binary completely reachable DFA $\MCA = (\mathbb{Z}_n, \{a,b\},\delta)$, one can easily obtain that $|\excl(a)| = |\dupl(a)| = 1$. We may and will assume that $\excl(a) = \{0\}$. Futher, $(\mathbb{Z}_n, \{a,b\},\delta)$ is called \emph{standardized} if $\dupl(a) = \{\delta(0,a)\}$.

For a standardized completely reachable DFA, we establish an upper bound for the length of shortest words that reach a non-empty subset of the state set.
\begin{theorem} \label{thm:upperBound}
	Let $\MCA = (\MBZ_n, \{a,b\}, \delta)$ be a standardized completely reachable DFA. 
	Every non-empty subset $S \subseteq \MBZ_n$ is reachable with a word of length at most $n(n-|S|) + n - 1$.
\end{theorem}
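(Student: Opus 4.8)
The plan is to work with \emph{excluded sets}: since $\delta(\MBZ_n,w)=\MBZ_n\setminus\excl(w)$ for every $w$, reaching $S$ is the same as finding a short word $w$ with $\excl(w)=\MBZ_n\setminus S$, a set of size $n-k$ (where $k=|S|$). Every word over $\{a,b\}$ equals one of the form $b^{c_0}ab^{c_1}a\cdots ab^{c_L}$ with $0\le c_t\le n-1$, and since $\MBZ_n$ is $b$-invariant we may take $c_0=0$; such a word has $L$ letters $a$, has $L+1$ blocks $b^{c_t}$, and has length $L+\sum_t c_t\le L+(L+1)(n-1)$, which for $L=n-k$ equals exactly $n(n-k)+n-1$. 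So it suffices to produce a reaching word of $S$ that uses about $n-k$ letters $a$ and has all $b$-blocks shorter than $n$.

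I would then use the standardized hypothesis to pin down the action of $a$. Put $c=\delta(0,a)$ and let $d\neq 0$ be the unique state with $\delta(d,a)=c$; it exists and is unique because $\excl(a)=\{0\}$ and $\dupl(a)=\{c\}$. Then $a$ is injective on every subset of $\MBZ_n$ avoiding $\{0,d\}$ and merges $0$ with $d$; equivalently, for $R\subseteq\MBZ_n$ the word $b^{c}a$ decreases $|R|$ by exactly one when $R$ contains a pair $\{p,\,p\oplus d\}$ (call it a \emph{$d$-pair}) with $c\equiv -p\pmod n$, and leaves $|R|$ unchanged for every other choice of $c$. This near-permutation behaviour of $a$ is the only use of the standardized assumption.

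The core is a descending chain of reachable sets. Starting from $R_0=\MBZ_n$, I would repeatedly pick a $d$-pair of the current set, rotate it (by some $b^{i}$ with $i<n$) onto $\{0,d\}$, and apply $a$, dropping the size by one. Since $\MBZ_n$ itself contains the $d$-pair $\{0,d\}$, the first step needs no rotation, and as long as the current set still has a $d$-pair the process continues, reaching a size-$k$ set after $n-k$ applications of $a$ with at most $n-k-1$ rotations of length $<n$. When the current set carries several $d$-pairs there is real freedom in which one to merge, and I would use this freedom to steer the chain so that the size-$k$ set it produces is $S$ itself (or a cyclic shift of $S$, which is then corrected by one extra rotation of length $<n$, absorbed into the term $n-1$).

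The obstacle — and the only place complete reachability is needed in full — is that the current set $R$ may contain \emph{no} $d$-pair, so no rotation followed by $a$ can shrink it and the chain stalls. One is then forced to apply $a$ ``idly'', replacing $R$ by a same-size set $\delta(R,b^{i}a)$, hoping the result again has a $d$-pair. The crux of the whole proof is to show, using complete reachability of $\MCA$ together with the structure of completely reachable binary automata developed in \cite{CV22,CV23}, that the descent can always be continued and that the total overhead caused by idle applications of $a$ is essentially a single extra letter $a$, which moreover can be taken with no accompanying rotation. This ``one idle, rotation-free step suffices'' statement is where I expect the main difficulty to lie; granting it, the count of the first paragraph — $n-k$ size-decreasing applications of $a$, at most one idle one (costing only its letter, with an empty block), the trivial block acting on $\MBZ_n$, and the single corrective rotation — delivers a reaching word of $S$ of length at most $n(n-k)+n-1$.
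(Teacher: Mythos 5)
Your opening reduction (count the letters $a$ and the $b$-blocks; a word with $L$ letters $a$ and all blocks of length $\le n-1$ has length $\le L+(L+1)(n-1)$) is sound, and you have correctly located the difficulty in the stalling phenomenon: a set with no $d$-pair cannot be shrunk by any $b^ia$. But the claim you then defer everything to --- that a single idle, rotation-free application of $a$ always suffices to restore a $d$-pair, so the total overhead of all stalls is one letter --- is not what happens in these automata, and the paper proves nothing of the sort. The paper's argument runs \emph{backwards}, via preimages $R=\delta^{-1}(S,ab^i)$, and its key step (\cref{prop:1}) allows the preimage to fail to grow many times: what it guarantees is that each non-expanding step strictly decreases an invariant $m(\cdot)$ valued in $\{0,\dots,\ell\}$, where $\ell$ indexes the subgroup chain $H_0\subsetneq H_1\subsetneq\cdots\subsetneq H_\ell=\MBZ_n$ attached to $\MCA$ via \cite{CV22}. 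Between two consecutive size increases there can thus be up to $\ell$ (roughly $\log_2 n$) idle applications of $a$, each with a nonempty $b$-block. The length budget survives not because there is at most one stall, but because the block length of a stalling step is bounded by $\frac{n}{|H_{t}|}-\frac{n}{|H_{m}|}+1$ with $t,m$ taken from the chain, and these quantities \emph{telescope} across consecutive stalls to at most $n-1$ plus the number of steps per size increment. Without this telescoping (or a substitute for it), already $n-k+2$ blocks of length up to $n-1$ exceed $n(n-k)+n-1$, so your accounting collapses as soon as two stalls occur.

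There is a second gap of comparable size: you run the chain forwards from $\MBZ_n$ and propose to ``steer'' the choice of merged $d$-pair so that the final $k$-set is exactly $S$. Determining which size-$k$ sets are attainable by such greedy forward merging, with length control, is essentially the theorem itself; nothing in the sketch says how the steering is done. Working with preimages of $S$, as the paper does, makes the target automatic --- the only thing left to control is whether the preimage grows, and if not, how far down the chain $H_0\subsetneq\cdots\subsetneq H_\ell$ one has descended. Your $d$-pair picture is the $t=0$ layer of that analysis; the general layer (\cref{lem:A}) replaces the pair $\{p,p\oplus d\}$ by an $H_t$-coset $C\subseteq H_m\setminus S$ together with a translate $C\oplus u\subseteq H_m\cap S$ for some $u\in\delta(H_t,a)\setminus H_t$. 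I would rebuild the proof around those two ingredients.
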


Assume that $\MCA = (\mathbb{Z}_n, \{a,b\}, \delta)$ is a standardized completely reachable DFA. Denote the set $\{\delta(0,a^i) : i \ge 1\}$ by $O(\MCA)$, is called the \emph{orbit} of $\MCA$. 
The subgroup of $(\mathbb{Z}_n, \oplus)$ generated by $O(\MCA)$, denoted by $K(\MCA)$ or simply $K$, is called the \emph{orbit subgroup} of $\MCA$. 
We obtain a generalization of \cite[Theorem 1]{CV23}.

\begin{theorem} \label{thm:Don}
	Let $\MCA = (\MBZ_n, \{a,b\}, \delta)$ be a standardized completely reachable DFA. 
	If $|K(\MCA)| \ge \frac{n}{2}$, then $\MCA$ fulfills Don's conjecture.
\end{theorem}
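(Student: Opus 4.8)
The plan is to revisit the construction behind \Cref{thm:upperBound} and show that, under the hypothesis $|K(\MCA)| \ge \tfrac n2$, the surplus ``$+\,(n-1)$'' in that bound can be removed. Two reductions come first. Since $|K(\MCA)|$ divides $n$, the hypothesis $|K(\MCA)| \ge \tfrac n2$ forces either $K(\MCA) = \MBZ_n$, or $n$ even and $K(\MCA) = 2\MBZ_n$ (the unique subgroup of index $2$); the argument splits along this dichotomy. Also, a subset of size $n$ is $\MBZ_n$ itself, reached by $\epsilon$ with $0 = n(n-n)$, so I may assume $1 \le |S| = k \le n-1$.

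The technical engine is the pre-image viewpoint. Put $d := \delta(0,a)$; since $\delta(0,a)=d$, the fibre $\delta^{-1}(d,a)$ is $\{0,c\}$ for a unique $c \ne 0$. The letter $b$ acts on subsets by $P \mapsto P \ominus 1$, and for $a$ one has $|\delta^{-1}(P,a)| = |P| + [\,d \in P\,] - [\,0 \in P\,]$, so $\delta^{-1}(\cdot,a)$ enlarges $P$ by one exactly when $d \in P$ and $0 \notin P$. A word $w = a_1\cdots a_\ell$ reaches $S$ iff the backward run $S_\ell = S$, $S_{i-1} = \delta^{-1}(S_i,a_i)$, ends at $S_0 = \MBZ_n$; moreover, if every $a$ in $w$ acts (in this backward run) on a set missing $0$, an easy induction shows $\delta(\MBZ_n,w)=S$ automatically. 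Going from level $k$ to level $n$ needs exactly $n-k$ enlarging $a$-steps, and two consecutive $a$-steps cannot both enlarge (after an enlarging $a$-step the set contains $0$), so a word that realises only enlarging $a$-steps must have the form
\[
	w = a\,(b^{t_2}a)\cdots(b^{t_{n-k}}a)\,b^{t_{n-k+1}},
	\qquad 1 \le t_i \le n-1\ (i \le n-k),\quad 0 \le t_{n-k+1}\le n-1;
\]
its length is $(n-k) + \sum_{i=2}^{n-k+1}t_i \le n(n-k)$. Such ``pure'' words need not exist for every $S$ — this is precisely the source of the surplus — so the real task is: for each target $S$, either produce a pure word, or pay for the finitely many enlargement-free (``idle'') $a$-steps out of rotation budget left unused, so that the total is still at most $n(n-k)$.

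The hypothesis limits the obstructions. An enlarging step is unavailable exactly when $\MBZ_n \setminus P$ is a union of cosets of $\langle d\rangle$, and iterating $\delta^{-1}(\cdot,a^{j})$ shows that whether one can escape such a set by idle $a$-steps is governed by the group generated by the whole orbit $O(\MCA)$, that is, by $K = K(\MCA)$. If $K = \MBZ_n$, no set is a permanent obstruction: a downward induction on $n-|P|$, maintaining the invariant that the current set can be rotated to contain $d$ and miss $0$ and tracking which sets are reachable at each level, yields a pure word for every $S$; this recovers \cite{CV23}. If $K = 2\MBZ_n$, then $n$ is even and $d$, like every orbit element, is even, so $\langle d\rangle \le 2\MBZ_n$ and the obstruction sets have complements built from cosets of a subgroup of $2\MBZ_n$. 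I would handle these by passing, with at most one idle $a$-step, from the bad coset to a good one, and charging that step to the rotation budget of the same stage — room that is available because each coset of $2\MBZ_n$ has only $\tfrac n2 \le n-1$ elements, so the setup and finishing rotations at that stage can be kept of length $\le \tfrac n2 - 1$ rather than $n-1$, creating exactly the slack needed.

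The step I expect to be hardest is this last one: proving, cardinality by cardinality, that in the index-two case the family of subsets reachable from $S$ by enlarging $a$-steps together with at most one idle $a$-step per stage is large enough to contain $S$, while the rotation amounts still sum to at most $(n-k)(n-1)$. The delicacy is that when $\langle d\rangle$ is small there are many intermediate obstruction sets, so the rotation freedom must be used simultaneously to route around all of them and to exploit any invariance of $S$ itself under translation by $\langle d\rangle$. I would organise the induction so that at every level the working set is kept in a canonical position relative to $\langle d\rangle$ and to the two cosets of $K$, and verify that this position is preserved under $\delta^{-1}(\cdot,a)$ using $[\MBZ_n:K]\le 2$; the parity bookkeeping (keeping $d$ even compatible with the excluded coordinates at each step) is the part most likely to require care.
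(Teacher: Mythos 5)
Your reduction to the dichotomy $K=\MBZ_n$ versus $K=2\MBZ_n$, and your backward-run accounting (each enlarging $a$-step costs $1$ plus at most $n-1$ rotations, and exactly $n-k$ enlarging steps are needed), match the skeleton of the paper's argument, which runs the same bookkeeping through the notion of $n$-expandability. But there are two problems. First, even in the case $K=\MBZ_n$ your invariant that ``the current set can be rotated to contain $d$ and miss $0$'' need not hold: the obstruction to a single enlarging $a$-step is being a union of $\langle d\rangle$-cosets where $d=\delta(0,a)$, and $\langle d\rangle$ can be much smaller than $K=\langle O(\MCA)\rangle$, since the orbit $O(\MCA)$ is the whole $a$-orbit of $0$, not just $\{d\}$. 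So ``pure'' words with one $a$ per level do not exist in general; one genuinely needs the idle $a$-steps of the $a^sb^t$ expansions of \cref{prop:expand}. This part is repairable, since $s+t\le n$ there, so the per-level budget survives.

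Second, and this is the real gap, your treatment of the index-two case assumes that the two genuine obstruction sets, the cosets $K$ and $K\oplus 1$, can each be expanded within total length $n$ by inserting one idle $a$-step and shortening the rotations to $\le \frac n2-1$ ``because each coset of $2\MBZ_n$ has only $\frac n2$ elements''. That inference is a non sequitur --- the rotations $b^t$ needed to reposition the working set range over all of $\MBZ_n$ and are not bounded by the coset size --- and the conclusion is false in general: the paper, following \cite[Example 2]{CV23}, only obtains that $K\oplus1$ is $n$-expandable while $K$ is merely $(n+1)$-expandable (\cref{prop:expand+1}), and that extra $+1$ cannot simply be absorbed at the same stage. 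The machinery that actually closes this gap is entirely absent from your plan: when $|O(\MCA)|>1$ the paper recovers the lost unit at a \emph{different} level of the chain by showing every codimension-two subset is $(n-1)$-expandable (\cref{lem:lemon}); when $|O(\MCA)|=1$ it instead bounds the length of a shortest word reaching $K$ directly by $\frac12 n^2$ via a characterization of the non-$(n-1)$-expandable large subsets (\cref{lem:extreme,lem:acnes}), and it still needs computer enumeration for $n\le 8$. You correctly flag this step as the hardest, but the mechanism you propose for it does not work, so the proof is incomplete exactly where the theorem's difficulty lies.
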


To the best of our knowledge, it is still unknown whether Don's conjecture holds for standardized completely reachable  DFAs or not. But we find infinite many binary completely reachable DFAs do NOT fulfill Don's conjecture and thus we answer \cite[Problem 4]{GJ19} negatively.
\begin{theorem} \label{thm:non-Don}
	There are infinitely many binary completely reachable DFAs that do not fulfill Don's conjecture.
\end{theorem}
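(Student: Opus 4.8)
}
The plan is to exhibit, for every sufficiently large $n$ (ranging over a suitable arithmetic progression), an explicit binary automaton $\MCA_n = (\MBZ_n, \{a,b\}, \delta)$ together with a subset $S_n \subseteq \MBZ_n$ such that $\MCA_n$ is completely reachable while every reaching word of $S_n$ has length strictly greater than $n(n - |S_n|)$. As usual $b$ acts as $q \mapsto q \oplus 1$; the letter $a$ will be a carefully chosen almost-permutation with $\excl(a) = \{0\}$ and a single doubled value $d$ having two preimages $p_1, p_2$, and --- this is the essential design choice --- we take $0 \notin \{p_1, p_2\}$, so that $\dupl(a) = \{d\} \ne \{\delta(0,a)\}$ and $\MCA_n$ is \emph{not} standardized; consequently \cref{thm:upperBound} does not apply and there is room for a violation. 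The first task is to select $a$ (and $S_n$) so that $\MCA_n$ is genuinely completely reachable. This is delicate, because complete reachability demands that \emph{every} subset be reachable --- and not too slowly on average --- so $S_n$ has to be an outlier inside an otherwise well-behaved automaton; I expect to verify complete reachability by invoking the combinatorial criteria for complete reachability of circular binary DFAs developed in \cite{BV16,CV22} rather than by exhibiting reaching words one subset at a time.

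For the lower bound, fix a reaching word $w$ of $S_n$ and write $w = b^{c_0} a\, b^{c_1} a \cdots a\, b^{c_{\ell}}$. Because $|\excl(a)| = 1$, the image size $|\delta(\MBZ_n, \cdot)|$ is non-increasing along $w$ and drops by exactly $1$ at an occurrence of $a$ precisely when the current set contains $\{p_1, p_2\}$, and is unchanged otherwise; hence $w$ contains at least $n - |S_n|$ occurrences of $a$, of which exactly $n - |S_n|$ are \emph{effective}. The core of the argument is to show that the choice of $a$ forces an effective $a$ to be available only when the current set occupies one of a small, rigidly constrained family of positions, and that getting from the set left after one effective $a$ to a position admitting the next effective $a$ costs $\Omega(n)$ letters --- strictly more than $n$ for enough of the merges that even the ``free'' head start of $n - 1$ available on the very first merge cannot absorb the surplus, so that the total length exceeds $n(n - |S_n|)$. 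I would make this precise either through a potential function $\Phi$ on $\Pow(\MBZ_n)$ that is $1$-Lipschitz along $b$, changes in a controlled way along $a$, and separates $\MBZ_n$ from $S_n$ by more than $n(n - |S_n|)$, or, working backwards from $S_n$ through $a$- and $b$-preimages of subsets, by tracking the cyclic positions of the missing states and of $d$ and showing they are driven around essentially the whole of $\MBZ_n$ between consecutive effective $a$'s.

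The principal obstacle is exactly this lower bound, together with the tension it is in with complete reachability: $S_n$ must be reachable, yet only by long words. Concretely I anticipate the delicate points to be (i) tuning $a$ and $S_n$ --- I would first try $|S_n| = n - c$ for a small fixed $c \ge 2$, where Don's bound is merely $cn$ and a single over-priced merge already breaks it, and otherwise $|S_n|$ a fixed fraction of $n$, where one instead needs every merge to cost more than $n$ on average --- so that the short routes that quickly reach ``most'' subsets of a given size all miss $S_n$; and (ii) ruling out \emph{all} shortcuts, i.e.\ showing that no interleaving of $a$'s and $b$'s can bypass the bottleneck, which is where a clean monovariant would pay off. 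Letting $n \to \infty$ then yields infinitely many pairwise non-isomorphic completely reachable binary DFAs failing Don's conjecture, which is \cref{thm:non-Don} and settles \cite[Problem 4]{GJ19} in the negative.
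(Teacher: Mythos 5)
Your outline points in the right direction and in fact matches the paper's strategy at the level of design decisions: the paper's counterexamples $\MCA_n$ (\cref{ex:counterexample}) are circular, non-standardized binary DFAs, the violating subset has size $n-2$ (your ``small fixed $c\ge 2$'' option, with Don's bound $2n$), and the violation comes from a reaching-word lower bound of $\tfrac{5}{2}n-3$, i.e.\ an overshoot of roughly $n/2$ concentrated in the last few $a$-blocks rather than a per-merge surplus. However, as a proof the proposal has a genuine gap: every load-bearing step is deferred. You do not specify the action of $a$, you do not specify $S_n$, you do not verify complete reachability (you only say you ``expect'' to invoke criteria from \cite{BV16,CV22}), and you do not prove the lower bound (you offer two candidate mechanisms --- a potential function or a backward tracking argument --- without carrying out either). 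Since the entire content of the theorem is the existence of such a family, and you yourself identify the lower bound and its tension with complete reachability as ``the principal obstacle,'' what remains is a research plan rather than an argument.

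For comparison, the paper resolves each of these points concretely. The letter $a$ is the almost-identity map $0\mapsto n-3\mapsto 1\mapsto n-2\mapsto n/2\mapsto n-4\mapsto n-1\mapsto n/2$ fixing all other states, so $\excl(a)=\{0\}$ and $\dupl(a)=\{n/2\}$ with preimages $n-2,n-1$. Complete reachability (\cref{lem:CR}) is proved by a short witness argument: a maximal unreachable set $S$ must be invariant under $q\mapsto q\oplus n/2$, and then two of its predecessors are witnesses whose union misses $n-1$, contradicting \cite[Corollary 6]{FS23}. The lower bound (\cref{lem:SRW}) is obtained by decomposing any reaching word of $S=\MBZ_n\setminus\{n-1,\tfrac n2-1\}$ into blocks $ab^{i}$ and analysing the last two blocks backwards via \cref{lem:predecessor}: the unique $\Pi$-predecessor of $S$ forces the final block to cost at least $n/2$, the penultimate block costs at least $n-1$, and the remaining prefix costs at least $n-3$, giving $\tfrac52 n-3>2n$ for $n\ge 10$. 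Any completed version of your argument would have to supply analogues of all three of these ingredients.
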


The remaining of this note will proceed as follows.
In \cref{sec:UpperBound}, we prove \cref{thm:upperBound}. In \cref{sec:fulfills}, we present a proof of \cref{thm:Don}. 
In \cref{sec:doNotfulfill}, we describe experiments performed to gain some binary automata that do not fulfill Don's conjecture. Based on these experiments, we construct an infinite series of binary automata that do not fulfill Don's conjecture. At the end, we summarize our results and discuss the relation between them in \cref{sec:conclusion}.

\section{An upper bound of the length of shortest reaching words in standardized completely reachable automata} \label{sec:UpperBound}

For a group $G$ and a subset $S \subseteq G$, we write $\langle S \rangle$ for the \emph{subgroup generated by $S$}, which is the smallest subgroup of $G$ containing $S$.
For a subset $X \subseteq \MBZ_n$ and $a \in \MBZ_n$, write $X \oplus a$ for $\{x \oplus a : x \in X\}$.

Let $\MCA = (\MBZ_n,\{a,b\},\delta)$ be a standardized completely reachable DFA. 
Let $H_0$ be the trivial subgorup of $(\mathbb{Z}_n, \oplus)$. For every $i \ge 1$, define the subgroup $H_i (= H_i(\MCA))$ of $(\mathbb{Z}_n, \oplus)$ as
\[
	H_i := \langle U_i \cup H_i \rangle
\]
where $U_i := \delta(H_i,a) \setminus H_i$.
Since $\MCA$ is completely reachable, by \cite[Proposition 1]{CV22}, there exists a positive integer $\ell = \ell(\MCA)$ such that
\[
	H_0 \subsetneq H_1 \subsetneq \cdots \subsetneq H_{\ell} = (\mathbb{Z}_n, \oplus).
\]
For a non-empty proper subset $S$ of $\MBZ_n$, we define 
\begin{itemize}
	\item $m(S) := \min \{i : H_i \cap S \notin \{\emptyset, H_i\}\}$;
	\item $t(S) := \max \{i : \text{$H_{m(S)} \cap S$ is a union of $H_i$-cosets}\}$.
\end{itemize}
Since $H_{m(S)} \cap S \neq H_{m(S)}$, it holds that
\begin{equation} \label{eq:tm}
	0 \le t(S) < m(S) \le \ell(\MCA).
\end{equation}

For two subsets $P,P' \subseteq \MBZ_n$, if there exists a word $w \in \Sigma^*$ such that $\delta(P,w) = P'$, then we say that $P$ is a \emph{$w$-predecessor} of $P'$, or simply a \emph{predecessor} of $P'$.

\begin{lemma} \label{lem:A}
	Let $\MCA = (\MBZ_n,\{a,b\},\delta)$ be a standardized completely reachable DFA.
	Let $S$ be a non-empty proper subset of $\MBZ_n$. 
	Let $t = t(S)$ and $m = m(S)$.
	There exists an $H_t$-coset $C$ and $u \in U_t$ such that $C \subseteq H_m \setminus S$ and $C \oplus u \subseteq H_m \cap S$.
\end{lemma}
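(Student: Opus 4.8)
The plan is to confine the whole argument to the subgroup $H_m$ and to reduce it to a directed connectivity property of a Cayley graph of the finite abelian group $H_{t+1}/H_t$. Restricting to $H_m$ costs nothing, because $H_t\subseteq H_{t+1}\subseteq H_m$ since $t+1\le m\le \ell(\MCA)$ by \eqref{eq:tm}.

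\emph{Step 1: extract a ``split'' $H_{t+1}$-coset.} By the definition of $t=t(S)$ the set $H_m\cap S$ is a union of $H_t$-cosets, while by the maximality of $t$ it is \emph{not} a union of $H_{t+1}$-cosets. Hence some $H_{t+1}$-coset $D$ satisfies $D\subseteq H_m$ and $\emptyset\subsetneq D\cap S\subsetneq D$. As $D\subseteq H_m$ and $D$ is itself a union of $H_t$-cosets, $D\cap S=D\cap(H_m\cap S)$ is a union of $H_t$-cosets; so the $H_t$-cosets lying inside $D$ partition into those contained in $S$ and those disjoint from $S$, and both families are non-empty. Choosing any $H_t$-coset $C\subseteq D$ disjoint from $S$ already gives $C\subseteq H_m\setminus S$, which is one half of the assertion.

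\emph{Step 2: the group-theoretic core.} Fix $x\in D$ and identify the $H_t$-cosets inside $D=x\oplus H_{t+1}$ with the elements of $G:=H_{t+1}/H_t$ via $(x\oplus h)\oplus H_t\mapsto h\oplus H_t$; under this identification translating a coset by some $u\in H_{t+1}$ corresponds to adding the image of $u$ in $G$, and by Step~1 the $H_t$-cosets of $D$ disjoint from $S$ form a subset $N$ with $\emptyset\subsetneq N\subsetneq G$. Recall $H_{t+1}=\langle U_t\cup H_t\rangle$, so the image $\overline{U_t}$ of $U_t$ in $G$ generates $G$. If $N$ were invariant under translation by every element of $\overline{U_t}$, then, $N$ being finite, each inclusion $N+\bar u\subseteq N$ would be an equality, so the translation-stabiliser $\{g\in G:N+g=N\}$ would be a subgroup of $G$ containing $\overline{U_t}$, hence all of $G$; that forces $N\in\{\emptyset,G\}$, a contradiction. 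Therefore there are $u\in U_t$ and an $H_t$-coset $C\subseteq D$ with $C$ disjoint from $S$ but $C\oplus u$ not disjoint from $S$; since $C\oplus u$ is again one of the $H_t$-cosets of $D$, it is in fact \emph{contained} in $S$. As $u\in U_t\subseteq H_{t+1}$ we get $C\oplus u\subseteq D\subseteq H_m$, whence $C\oplus u\subseteq H_m\cap S$; together with $C\subseteq H_m\setminus S$ this is exactly the claim.

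\emph{Main obstacle.} The only genuine subtlety is that the conclusion demands $u\in U_t$, not merely $u\in U_t\cup(-U_t)$: the letter $a$ acts by a non-permutation, so there is no symmetry to fall back on, and Step~2 is precisely the directed (strong-connectivity) form of the usual Cayley-graph connectivity argument, made to work by the finiteness trick $N+\bar u\subseteq N\Rightarrow N+\bar u=N$. Everything else — that $D$ can be taken inside $H_m$, and that $D\cap S$ is a union of $H_t$-cosets — is immediate from $H_t\subseteq H_{t+1}\subseteq H_m$ and the definitions of $m(S)$ and $t(S)$.
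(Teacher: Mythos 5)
Your proof is correct and takes essentially the same approach as the paper: both argue by contradiction that if no such pair $(C,u)$ existed, the family of $H_t$-cosets outside $S$ would be closed under translation by $U_t$ and hence, by finiteness and the fact that $U_t\cup H_t$ generates $H_{t+1}$, would be a union of $H_{t+1}$-cosets, contradicting the maximality of $t(S)$. The only difference is cosmetic: you localize to a single split $H_{t+1}$-coset and pass to the quotient $H_{t+1}/H_t$, making explicit the finiteness step ($N\oplus\bar u\subseteq N\Rightarrow N\oplus\bar u=N$) that the paper leaves implicit.
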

\begin{proof}
	By the definition of $m$ and $t$, there exist some $H_t$-cosets in $H_m \setminus S$. 

	Assume, for a contradiction, that for every $u \in U_t$ and $H_t$-coset $C \subseteq H_m \setminus S$, we have $C \oplus u$ is not a subset of $H_m \cap S$.
	Since both $H_m$ and $H_m \cap S$ are unions of $H_t$-cosets, it holds that $H_m \setminus S$ is also a union of $H_t$-cosets.
	For every $u \in U_t$, the set $C \oplus u$ is an $H_t$-coset and then is a subset of $H_m \setminus S$. Since $H_{t+1}$ is generated by $H_t \cup U_t$, the set $H_m \setminus S$ is a union of $H_{t+1}$-cosets. This implies $H_m \cap S$ is a union of $H_{t+1}$-cosets which contradicts with the maximality of $t$.
\end{proof}

\begin{lemma} \label{lem:B}
	Let $H$ and $H'$ be two subgroup of $(\MBZ_n, \oplus)$ such that $H \subseteq H' \subseteq \MBZ_n$.
	Let $C$ be an $H$-coset such that $C \subseteq H'$. 
	Then there exists an integer $i$ such that $H \oplus i = C$ and $0 \le i \le \frac{n}{|H|} - \frac{n}{|H'|}$.
\end{lemma}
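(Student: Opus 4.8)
The plan is to reduce the statement to the classification of subgroups of the cyclic group $(\MBZ_n,\oplus)$. I would first recall that for each divisor $d$ of $n$ there is a unique subgroup of order $d$, namely the set of all multiples of $\tfrac nd$, and that the subgroup of order $d$ is contained in the subgroup of order $d'$ exactly when $d\mid d'$. Writing $d=|H|$ and $d'=|H'|$, this gives $d\mid d'$ together with
\[
	H=\Bigl\{k\cdot\tfrac nd : 0\le k<d\Bigr\},\qquad H'=\Bigl\{k\cdot\tfrac n{d'} : 0\le k<d'\Bigr\}.
\]

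Next I would describe the $H$-cosets that lie inside $H'$. The distinct $H$-cosets of $\MBZ_n$ are $H\oplus i$ for $0\le i<\tfrac nd$, and $H\oplus i=H\oplus j$ iff $i\equiv j\pmod{n/d}$. Since $0\in H$, a coset $H\oplus i$ is contained in $H'$ iff $i\in H'$, that is, iff $i$ is a multiple of $\tfrac n{d'}$. The integers of the form $k\cdot\tfrac n{d'}$ with $0\le k\le \tfrac{d'}d-1$ all lie in $[0,\tfrac nd)$ and are pairwise distinct, so they yield $\tfrac{d'}d=[H':H]$ pairwise distinct $H$-cosets contained in $H'$; by the index count these are all of them.

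Finally, given the coset $C\subseteq H'$ from the statement, I would take its representative from this list: there is $k$ with $0\le k\le\tfrac{d'}d-1$ and $C=H\oplus i$ where $i=k\cdot\tfrac n{d'}$. Then
\[
	0\le i\le\Bigl(\tfrac{d'}d-1\Bigr)\cdot\tfrac n{d'}=\tfrac nd-\tfrac n{d'}=\tfrac n{|H|}-\tfrac n{|H'|},
\]
which is exactly the claimed bound. I do not expect any genuine obstacle here: the only points needing care are the translation between an element of $\MBZ_n$ and its representative in $\{0,1,\dots,n-1\}$, and checking that the enumerated family of cosets really exhausts all $H$-cosets inside $H'$, so that $C$ is guaranteed to be among them. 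The degenerate case $H=H'$ is covered automatically, giving $i=0$ and bound $0$.
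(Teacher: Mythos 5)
Your proof is correct and follows essentially the same route as the paper's: pick the representative $i$ of $C$ in $[0,\tfrac{n}{|H|})$, observe that $C\subseteq H'$ forces $i$ to be a multiple of $\tfrac{n}{|H'|}$, and conclude $i\le\tfrac{n}{|H|}-\tfrac{n}{|H'|}$. The only difference is that you spell out the classification of subgroups and cosets of $\MBZ_n$ in more detail than the paper does.
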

\begin{proof}
	Since $C$ is an $H$-coset, there exists a non-negative integer $i \le \frac{n}{|H|}-1$ such that $H \oplus i =C$. Meanwhile, $C \subseteq H'$ and $C \oplus i = H \subseteq H'$ implies that $i$ is a multiple of $\frac{n}{|H'|}$. Then $0 \le i \le \frac{n}{|H|} - \frac{n}{|H'|}$.
\end{proof}


\begin{proposition} \label{prop:1}
	Let $\MCA = (\MBZ_n,\{a,b\},\delta)$ be a standardized completely reachable DFA.
	Let $S$ be a non-empty proper subset of $\MBZ_n$.
	Then there exists a word $w$ of length at most $\frac{n}{|H_{t(S)}|} - \frac{n}{|H_{m(S)}|} + 1$ such that there exists a $w$-predecessor $R$ of $S$ and one of the following conditions holds
	\begin{enumerate}
		\item $|R| = |S| + 1$;
		\item $m(R) \le t(S) < m(S)$.
	\end{enumerate}
\end{proposition}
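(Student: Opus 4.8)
The plan is to start from $S$ and take pre-images under a suitable power of $b$ in order to ``slide'' the coset structure around inside $H_{m(S)}$, using the fact that $b$ acts as the rotation $q \mapsto q \oplus 1$ and hence commutes with the coset decomposition of any subgroup. Concretely, write $t = t(S)$, $m = m(S)$, and apply \cref{lem:A} to obtain an $H_t$-coset $C \subseteq H_m \setminus S$ and an element $u \in U_t = \delta(H_t, a) \setminus H_t$ such that $C \oplus u \subseteq H_m \cap S$. Since $C$ and $C \oplus u$ are both $H_t$-cosets lying inside $H_m$, by \cref{lem:B} we may write $C = H_t \oplus i$ and $C \oplus u = H_t \oplus j$ with $0 \le i, j \le \frac{n}{|H_t|} - \frac{n}{|H_m|}$; in particular $|i - j| \le \frac{n}{|H_t|} - \frac{n}{|H_m|}$.

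Next I would form the set $R := \delta^{-1}(S, b^{s})$ for an appropriate shift $s$ (of absolute value at most $\frac{n}{|H_t|} - \frac{n}{|H_m|}$) chosen so that the coset $C$ gets rotated onto the coset $C \oplus u$. Because $b$ is a bijection, $\delta(R, b^s) = S$, so $R$ is a $b^s$-predecessor of $S$ and $|R| = |S|$; then one further letter is applied. The idea is that inside the shifted picture, the single application of $a$ to $H_t$ (which by standardizedness has excluded set $\{0\}$ and duplicate set $\{\delta(0,a)\}$, so $a$ restricted to the relevant coset behaves almost bijectively) either merges $C$ into $S$, so that the full pre-image $\delta^{-1}(S, w)$ with $w = b^s a$ or $w = a b^s$ gains exactly one element and we land in case~(1) with $|R| = |S| + 1$; or else the action does not enlarge $S$ but instead destroys the property that $H_m \cap S$ is a union of $H_t$-cosets in a controlled way, forcing the new set $R$ to satisfy $m(R) \le t < m = m(S)$, which is case~(2). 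The length bound $\frac{n}{|H_t|} - \frac{n}{|H_m|} + 1$ comes from the at most $\frac{n}{|H_t|} - \frac{n}{|H_m|}$ copies of $b$ supplied by \cref{lem:B} plus the single letter $a$.

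The main obstacle I anticipate is making precise the dichotomy in the previous paragraph: one must show that after the shift and the single application of $a$, the pre-image is genuinely either one element larger than $S$ or has strictly smaller ``$m$-value'', with no third possibility. This requires tracking how $\excl(a)$ and $\dupl(a)$ interact with the particular $H_t$-coset $C$: if $0 \notin C$ and $\delta(0,a) \notin C$ then $a$ is injective on $C$ and maps it onto an $H_t$-coset, and we should argue that taking the pre-image of $S \cup (C \oplus u)$-type sets under $b^s a$ yields $|R| = |S| + 1$; whereas if the coset meets $\excl(a)$ or $\dupl(a)$, the count shifts and one shows instead that $R \cap H_t$ (equivalently $R \cap H_{t+1}$) is now a nontrivial, non-full subset that is not a union of $H_t$-cosets, giving $m(R) \le t$. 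I would handle this by a careful case analysis on whether the rotate-by-$s$ lands $0$ and $\delta(0,a)$ inside or outside $C$, using standardizedness to pin down exactly which elements of $\delta^{-1}(S, b^s a)$ are ``extra.'' The remaining steps — verifying the length bound and that $R$ is indeed a predecessor — are then routine.
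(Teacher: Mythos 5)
Your setup is right---invoke \cref{lem:A} to get the coset $C$ and the element $u\in U_t$, invoke \cref{lem:B} to bound the exponent of $b$, and use a word with a single occurrence of $a$ so the length is at most $\frac{n}{|H_t|}-\frac{n}{|H_m|}+1$---but the central dichotomy, which is the entire content of the proposition, is left as an acknowledged ``obstacle'' rather than proved, and the plan you sketch for it points in the wrong direction. You propose to choose the shift $s$ so that $C$ is rotated onto $C\oplus u$; that is not the right move (and a shift of $j-i$ could be negative, costing you the length bound once you replace it by $n-|j-i|$). The paper instead takes $w=ab^i$ with $C=H_t\oplus i$, so that $\delta^{-1}(S,b^i)=S\ominus i$ is disjoint from $H_t$; in particular $0\notin S\ominus i$, which (since $\excl(a)=\{0\}$) is exactly what makes $R=\delta^{-1}(S,ab^i)$ a genuine $w$-predecessor of $S$.

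The dichotomy then falls out of standardizedness in two lines, with no case analysis on where $0$ and $\delta(0,a)$ land: since $\dupl(a)=\{\delta(0,a)\}$ and $\excl(a)=\{0\}\not\subseteq S\ominus i$, either $\delta(0,a)\in S\ominus i$, in which case $|R|=|S|+1$; or $\delta(0,a)\notin\delta(R,a)$, which forces $0\notin R$ and hence $R\cap H_t\neq H_t$, while $C\oplus u\subseteq S$ gives $H_t\oplus u\subseteq\delta(R,a)$, so the (unique) $a$-preimage of $u$ lies in $R\cap H_t$ and $R\cap H_t\neq\emptyset$; by the definition of $m(\cdot)$ this yields $m(R)\le t$. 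Note also that case (2) is \emph{not} about $R\cap H_m$ failing to be a union of $H_t$-cosets, as you suggest; it is about $R$ meeting $H_t$ itself properly and nontrivially. Without this argument your proposal establishes the length bound and the predecessor property but not the alternative ``$|R|=|S|+1$ or $m(R)\le t(S)$'', which is what the induction in the proof of \cref{thm:upperBound} actually consumes.
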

\begin{proof}
	Let $t = t(S)$ and $m = m(S)$. Using \cref{lem:A}, 
	we can take an $H_t$-coset $C$ and $u \in U_t$ such that $C \subseteq H_m \setminus S$ and $C\oplus u  \subseteq H_m \cap S$.
	By \cref{lem:B}, let $C = H_t \oplus i$ such that $0 \le i \le \frac{n}{|H_t|} - \frac{n}{|H_m|}$.
	Let $w = ab^i$ and $R = \delta^{-1}(S, w)$. Note that the length of $w$ is at most $ \frac{n}{|H_t|} - \frac{n}{|H_m|} +1$.
	Since $C \cap S = \emptyset$, we have $0 \notin \delta^{-1}(S, b^i)$ and then the set $R$ is a predecessor of $S$.

	Now we need to check that $R$ satisfies the required condition. If $|R| > |S|$, we are done. Otherwise, we have $\delta(0,a) \notin \delta(R,a) = \delta^{-1}(S, b^i)$ which implies $0 \notin R$ and then 
	\begin{equation} \label{eq:properSubset}
		R \cap H_t \neq H_t.
	\end{equation}
	Since $C \oplus u \subseteq S$, we have $H_t \oplus u \subseteq \delta^{-1}(S, b^i) = \delta(R,a)$. Then $\delta^{-1}(u,a) \in R \cap H_t$ which implies
	\begin{equation} \label{eq:nonemptyset}
		R \cap H_t \neq \emptyset.
	\end{equation}
	Combining \cref{eq:tm,eq:properSubset,eq:nonemptyset}, $m(R) \le t(S) < m(S)$.
\end{proof}

Now we are ready to prove \cref{thm:upperBound}.

\begin{proof}[Proof of \cref{thm:upperBound}]
	Since $\MBZ_n$ is reachable by the empty word, we can assume that $S \neq \MBZ_n$.
	Write $s$ for the size of $S$.

	We will construct a sequence of subset $S_0, S_1, \ldots, S_{n-s}$ of $\MBZ_n$ and a sequence of words $w_1, \ldots w_{n-s}$ such that 
	\begin{itemize}
		\item $S_0 = S$, $S_{n-s} = \MBZ_n$;
		\item $|S_i| = s + i$, for $0 \le i \le n-s$;
		\item $\delta(S_i, w_i) = S_{i-1}$, for $1 \le i \le n-s$.
	\end{itemize}

	Let $i$ be an integer such that $1 \le i < n-s$. 
	Assume that $S_j$ and $w_j$ are already defined for $0 \le j \le i$. Now we will define $S_{i+1}$ and $w_{i+1}$ as follows. 

	Let $T_{i,0} = S_i$.
	For $j > 1$, if $T_{i,j-1}$ has been defined and $|T_{i,j-1}| = |T_{i,0}| = s+i$, then, applying \cref{prop:1} on $T_{i,j-1}$, there exists a subset $R \subseteq \MBZ_n$ and a word $w'$ of length at most $\frac{n}{|H_{t(T_{i,j-1})}|} - \frac{n}{|H_{m(T_{i,j-1})}|} + 1$ such that $\delta(R,w') =T_{i,j-1}$ and one of the following conditions holds
	\begin{itemize}
		\item $|R| > |T_{i,j-1}|$,
		\item $m(R) \le t(T_{i,j-1}) < m(T_{i,j-1})$.
	\end{itemize}
	Define $T_{i,j} = R$ and $w_{i,j} = w'$.

	Write $k_i$ for the maximum integer $j$ such that $T_{i,j}$ has been defined.
	Note that $k_i$ is well-defined, otherwise 
	\[
		m(T_{i,0}), m(T_{i,1}), \ldots
	\]
	is an infinite strictly decreasing non-negative integer sequence which is nonexistent.
	Observe that $|T_{i,k_i}| = |S_i| + 1$. Define $S_{i+1} = T_{i,k_i}$ and $w_i = w_{i,k_i} \cdots w_{i,1}$.

	Let $w = w_{n-s} \cdots w_1$. It is clear that $\delta(\MBZ_n,  w)= S$. To complete the proof, we estimate the length of $w$ as follows:
	\begin{align*}
		|w| &= \sum_{i = 1}^{n-s} \sum_{j = 1}^{k_i} |w_{i,j}| \\
		&\le \sum_{i = 1}^{n-s} \sum_{j = 1}^{k_i} \frac{n}{|H_{t(T_{i,j-1})}|} - \frac{n}{|H_{m(T_{i,j-1})}|} + 1\\
		& \le \sum_{i = 1}^{n-s} \left( \frac{n}{|H_{t(T_{i,k_i - 1})}|} - \frac{n}{|H_{m(T_{i,0})}|} + \sum_{j = 1}^{k_i} 1 \right) \quad\quad\quad\quad \text{[\cref{prop:1}, $m(T_{i,j+1}) \le t(T_{i,j})$]} \\
		& \le \sum_{i = 1}^{n-s} \left( n - 1 + \sum_{j = 1}^{k_i} 1 \right) \\
		& = (n-1)(n-s) + \sum_{i = 1}^{n-s} \sum_{j = 1}^{k_i} 1 \\
		& = (n-1)(n-s) + \sum_{i = 1}^{n-s} \left| \{H_t : t = t(T_{i,j-1}), 1 \le j \le k_i-1\} \right|\\
		& \le (n-1)(n-s) + \sum_{t = 0}^{\ell(\MCA)} \frac{n}{|H_{t}|} \\
		& \le (n-1)(n-s) + 2n -1 = n(n-s) + n -1.
	\end{align*}
\end{proof}

\section{DFAs that fulfills Don's conjecture}\label{sec:fulfills}

Let $\MCA= (\MBZ_n, \{a,b\} , \delta)$ be a DFA.
We say that a word $w \in \{a,b\}^*$ \emph{expands} a proper non-empty subset $S \subseteq \MBZ_n$ if there exists a subset $R \subseteq \MBZ_n$ such that $|R| > |S|$ and $\delta(R,w) = S$ and we say $S$ is \emph{$|w|$-expandable}.
Recall that the orbit $O(\MCA)$ of $\MCA$ is the set $\{\delta(0,a^i) : i \ge 1\}$. 
The orbit subgroup $K = K(\MCA)$ of $\MCA$ is the subgroup of $(\mathbb{Z}_n, \oplus)$ generated by $O(\MCA)$. 

The following result is established in \cite[Proposition 7]{CV23}.
\begin{proposition} \label{coro:CV23+}
	Let $\MCA = (\MBZ_n, \{a,b\},\delta)$ be a standardized completely reachable DFA. Every non-empty subset $S$ of $\MBZ_n$ which is not a union of $K$-cosets is $n$-expandable.
\end{proposition}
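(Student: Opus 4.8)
The plan is to exploit that, in a standardized completely reachable DFA, every power $a^j$ of $a$ acts essentially like $a$. The first and main step is the \emph{structural claim}, proved by induction on $j\ge 1$: $\excl(a^j)=\{0\}$ (equivalently $\im(a^j)=\MBZ_n\setminus\{0\}$) and $\dupl(a^j)=\{\delta(0,a^j)\}$. The base case $j=1$ is the definition of ``standardized'' together with $|\excl(a)|=|\dupl(a)|=1$. For the step, one first checks $\im(a^{j+1})=\delta(\im(a^j),a)=\delta(\MBZ_n\setminus\{0\},a)=\im(a)$: deleting $0$ loses nothing from the image, since a point having $0$ as its only $a$-preimage would be $\delta(0,a)$ with a singleton fibre, contradicting $\delta(0,a)\in\dupl(a)$. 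Hence $\excl(a^{j+1})=\{0\}$, so the $n$ pre-images fill $n-1$ nonempty fibres and exactly one fibre, say over a point $r$, has size $2$. Writing $\delta(\cdot,a^{j+1})=\delta(\delta(\cdot,a^j),a)$, a colliding pair of $a^{j+1}$ either descends from the unique colliding pair of $a^j$ — both of whose members map under $a^{j+1}$ to $\delta(\delta(0,a^j),a)=\delta(0,a^{j+1})$ — or is freshly merged by the last $a$, which would force one of its two points into the (empty) fibre of $a^j$ over $0$: impossible. So $r=\delta(0,a^{j+1})$, proving the claim. Writing $o_j:=\delta(0,a^j)$, this yields the expansion tool: for $i\in\{0,\dots,n-1\}$ and $R:=\delta^{-1}(S\ominus i,\,a^j)$ (with $X\ominus i:=\{x\ominus i:x\in X\}$ and $\ominus$ subtraction modulo $n$), one has $\delta(R,a^j b^i)=S$ whenever $i\notin S$, and $|R|=|S|+1$ exactly when moreover $o_j\oplus i\in S$; hence $S$ is $(j+i)$-expandable whenever $i\notin S$ and $o_j\oplus i\in S$.

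The second step is to choose $j$ economically. Let $L':=\{v\in\MBZ_n: S\oplus v=S\}$ be the (subgroup) stabilizer of $S$. Since $K=\langle o_j:j\ge 1\rangle$ and $S$ is not a union of $K$-cosets, $K\not\subseteq L'$, so there is a least $j_0\ge 1$ with $S\oplus o_{j_0}\ne S$; then $o_1,\dots,o_{j_0-1}\in L'$, so $L:=\langle o_1,\dots,o_{j_0-1}\rangle\subseteq L'$ and $S$ is a union of $L$-cosets. The set $G:=\{i: i\notin S,\ o_{j_0}\oplus i\in S\}$ is nonempty (as $|S|=|S\oplus o_{j_0}|$ and $S\ne S\oplus o_{j_0}$) and, since $S+L=S$, is a union of $L$-cosets; as every $L$-coset meets $\{0,\dots,\tfrac{n}{|L|}-1\}$ I may take $i_0\in G$ with $i_0\le\tfrac{n}{|L|}-1$, so that $w:=a^{j_0}b^{i_0}$ expands $S$ with $|w|=j_0+i_0\le j_0+\tfrac{n}{|L|}-1$. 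It remains to see $|L|\ge j_0$: by the $\rho$-shape of the orbit $0=o_0,o_1,o_2,\dots$ of $0$ under $a$ and minimality of $j_0$, the points $o_0,\dots,o_{j_0}$ are pairwise distinct — a repetition among them would, by eventual periodicity of the orbit, yield an index $a'$ with $1\le a'<j_0$ and $o_{a'}=o_{j_0}$, contradicting $S\oplus o_{a'}=S\ne S\oplus o_{j_0}$, while $o_0=0\ne o_k$ for $k\ge 1$ because $o_k\in\im(a)$ — hence $\{0,o_1,\dots,o_{j_0-1}\}\subseteq L$ has $j_0$ elements. Therefore $|w|\le j_0+\tfrac{n}{j_0}-1\le n$, as $x\mapsto x+n/x$ is at most $n+1$ on $[1,n]$, and $S$ is $n$-expandable.

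The hard part is the structural claim: it is where both hypotheses genuinely enter — complete reachability forces $|\im(a)|=n-1$, and ``standardized'' pins the unique collision of $a$ onto $\delta(0,a)$ — and the delicate point is to propagate the precise \emph{location} $\dupl(a^j)=\{\delta(0,a^j)\}$, not merely the count $|\dupl(a^j)|=1$; the mechanism is that $0$ has empty fibre under every $a^j$, which prevents the iterated $a$'s from ever creating a new coincidence. Everything afterwards — the expansion tool, the choice of $j_0$, and the estimate $j_0+n/|L|\le n+1$ — is routine once $|L|\ge j_0$ is noted.
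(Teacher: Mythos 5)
Your argument is correct. Note that the paper does not actually prove this proposition: it is imported from Casas--Volkov \cite[Proposition~7]{CV23} without proof, so there is no in-paper argument to compare against. Your reconstruction is a sound, self-contained derivation along the lines of the cited proof: the inductive structural claim that $\excl(a^j)=\{0\}$ and $\dupl(a^j)=\{\delta(0,a^j)\}$ for all $j\ge 1$ is exactly the leverage that ``standardized'' provides (and your handling of the two collision cases in the inductive step is right, since a fresh merge would need a preimage of $0$ under $a^j$, which does not exist); the choice of the least $j_0$ with $S\oplus\delta(0,a^{j_0})\ne S$, the observation that $S$ is then a union of $L$-cosets for $L=\langle \delta(0,a),\dots,\delta(0,a^{j_0-1})\rangle$, and the trade-off $j_0+n/|L|-1\le j_0+n/j_0-1\le n$ are all handled correctly, including the one delicate point: that $0,\delta(0,a),\dots,\delta(0,a^{j_0-1})$ are pairwise distinct, which you justify properly via the eventual periodicity of the orbit and the minimality of $j_0$. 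One remark: the paper subsequently relies on the sharper \cref{prop:expand}, which produces a word expanding $S\cap C$ for a single $K$-coset $C$ rather than $S$ itself; your construction expands $S$ as a whole, which suffices for \cref{coro:CV23+} as stated but would need a small localization (restricting the preimage count to the coset $C\ominus i_0$) to recover that refinement.
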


Moreover, the proof of the above proposition in \cite{CV23}  is able to obtain a stronger result as follows.

\begin{proposition} \label{prop:expand}
	Let $\MCA = (\MBZ_n, \{a,b\},\delta)$ be a standardized completely reachable DFA and $K$ its orbit subgroup. Let $S$ be a non-empty subset of $\MBZ_n$. If there exists a $K$-coset $C$ such that $S \cap C \notin \{\emptyset, C\}$, then there exists a word $w = a^sb^t$ of length $\le n$ that expands $S \cap C$. 
\end{proposition}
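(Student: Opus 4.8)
The plan is to re-run the proof of \cref{coro:CV23+} from \cite{CV23}, keeping track of the exact form and length of the word it produces and optimising the exponents so that the length stays at most $n$. To this end I would first record the structure of the letter $a$. Write $f:=\delta(\cdot,a)$ and $c_i:=\delta(0,a^i)$, so $c_0=0$ and $O(\MCA)=\{c_i:i\ge1\}$. Since $\excl(a)=\{0\}$ and, by standardisation, $\dupl(a)=\{c_1\}$, the map $f$ is injective on $\MBZ_n\setminus\{0\}$ with image $\MBZ_n\setminus\{0\}$, hence a bijection of $\MBZ_n\setminus\{0\}$ onto itself. Thus $O(\MCA)$ is exactly the cycle of this bijection through $c_1$; writing $r:=|O(\MCA)|$, the sequence $(c_i)_{i\ge1}$ has period $r$, the elements $0,c_1,\dots,c_{r-1}$ are pairwise distinct, and the only $f$-preimage of $c_1$ besides $0$ is $c_r$. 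In particular, for each $s\ge1$ the word $a^s$ merges exactly one pair of states, $\{0,c_r\}$, sending both to $c_s$.

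Put $Y:=S\cap C$, so $\emptyset\ne Y\subsetneq C$. I would then reduce the proposition to an arithmetic statement: it suffices to find integers $s\ge1$ and $t$ with $0\le t\le n-1$, $t\notin Y$, $c_s\oplus t\in Y$, and $s+t\le n$. Indeed, given such $s,t$, set $R:=\{0,c_r\}\cup R'$, where $R'$ is the preimage of $(Y\ominus t)\setminus\{c_s\}$ under the bijection $f^s\colon\MBZ_n\setminus\{0\}\to\MBZ_n\setminus\{0\}$; this is legitimate because $t\notin Y$ forces $0\notin Y\ominus t$. A routine check using the structure above shows that $c_r\notin R'$, that $|R|=|Y|+1$, and that $\delta(R,a^sb^t)=\bigl(\{c_s\}\cup f^s(R')\bigr)\oplus t=Y$, so $a^sb^t$ expands $Y$.

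It remains to choose $s$ and $t$. Since $\emptyset\ne Y\subsetneq C$ and $O(\MCA)=\{c_1,\dots,c_r\}$ generates $K$, the set $Y$ is not a union of $K$-cosets, so $Y$ fails to be invariant under translation by $c_j$ for at least one $j\in\{1,\dots,r\}$; let $j$ be the least such. Minimality forces $Y$ to be invariant under each of $c_1,\dots,c_{j-1}$, hence a union of cosets of $L:=\langle c_1,\dots,c_{j-1}\rangle$, and $L$ contains the $j$ distinct elements $0,c_1,\dots,c_{j-1}$, so $|L|\ge j$. Fix $y_0\in Y$ with $y_0\ominus c_j\notin Y$; the whole coset $L\oplus(y_0\ominus c_j)$ is then disjoint from $Y$, so let $t$ be its representative in $\{0,1,\dots,n/|L|-1\}$. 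Then $t\notin Y$ and $c_j\oplus t\in L\oplus y_0\subseteq Y$, and, setting $s:=j$,
\[
	s+t \;\le\; j+\frac{n}{|L|}-1 \;\le\; j+\frac{n}{j}-1 \;\le\; n ,
\]
the last inequality amounting to $(j-1)(n-j)\ge0$, which holds since $1\le j\le r\le n-1$.

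I expect the length bound to be the only real obstacle. The two membership conditions on $(s,t)$ are easy to meet in isolation, and the verifications inside the reduction are routine; what forces $s+t\le n$ is the choice of the \emph{least} ``bad'' index $j$, which simultaneously makes $Y$ periodic enough to bound $t$ by $n/|L|-1$ and guarantees $|L|\ge j$, after which the elementary inequality $j+n/j-1\le n$ closes the argument. I would also check the degenerate cases ($j=1$, $r=1$, $K=\MBZ_n$) separately, though each is subsumed by the same estimates.
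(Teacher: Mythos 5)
Your proof is correct, and it follows exactly the route the paper intends: the paper gives no proof of this proposition itself but asserts that the argument for \cref{coro:CV23+} in \cite{CV23} yields it, and your write-up is a faithful, self-contained reconstruction of that argument with the length bookkeeping made explicit (the reduction via the merged pair $\{0,c_r\}$, the least non-invariant index $j$, the subgroup $L$ with $|L|\ge j$, and the estimate $j+n/|L|-1\le n$). I find no gaps.
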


If one drops the condition that $S$ is not a union of $K$-cosets in \cref{coro:CV23+}, then $S$ is not necessarily $n$-expandable, see \cite[Example 2]{CV23}.
However, with a little more effort, we can obtain the following lemma.

\begin{proposition} \label{prop:expand+1}
	Let $\MCA = (\MBZ_n, \{a,b\},\delta)$ be a standardized completely reachable DFA and $K$ its orbit subgroup. If $K = (2\MBZ_n, \oplus)$, then  
	\begin{enumerate}
		\item for a subset $S \subseteq \MBZ_n$ which is not a union of $K$-cosets, $S$ is $n$-expandable;
		\item $K \oplus 1$ is $n$-expandable. Moreover there exists a word $a^sb^{t-1}a$ with $s+t \le n$ that expands $K \oplus 1$;
		\item $K$ is $(n+1)$-expandable.
	\end{enumerate}
\end{proposition}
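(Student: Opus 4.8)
The plan is to treat the three items as successively refined cases of the same mechanism, namely the $w=a^s b^t$ construction supplied by \cref{prop:expand}, using crucially the fact that $K=(2\MBZ_n,\oplus)$ has index $2$, so the only $K$-cosets are $K$ itself and $K\oplus 1$, and moreover $n$ is even. For \cref{prop:expand+1}\,(1), if $S$ is not a union of $K$-cosets then some $K$-coset $C$ meets $S$ properly; by \cref{prop:expand} there is a word $w=a^sb^t$ with $|w|\le n$ that expands $S\cap C$, i.e.\ there is $R$ with $|R|>|S\cap C|$ and $\delta(R,w)=S\cap C$. Since there are only two cosets, the other coset $C'$ satisfies $S\cap C'\in\{\emptyset,C'\}$; I would argue that $S\cap C'$ is ``stable'' enough under $w$ that one can enlarge $R$ to a set $R'$ with $\delta(R',w)=S$ and $|R'|>|S|$ — the key point being that $\delta^{-1}(S\cap C',w)$ is disjoint from $R$ and nonempty of the right size, which should follow because $w=a^sb^t$ acts on the quotient $\MBZ_n/K\cong\MBZ_2$ in a controlled way and $\operatorname{excl}(a)=\{0\}$, $\operatorname{dupl}(a)=\{\delta(0,a)\}$ pin down exactly where collisions and omissions happen. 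This is essentially the argument behind \cref{coro:CV23+}, specialized to index $2$.

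For \cref{prop:expand+1}\,(2), the obstruction is exactly that $S=K\oplus 1$ is a union of $K$-cosets, so \cref{prop:expand} does not apply directly. Here I would use that $K=2\MBZ_n$ forces $\delta(0,a)$ to be odd (otherwise the orbit would stay inside $K$ and $K$ could not be all of $\MBZ_n$ after finitely many steps, contradicting complete reachability via \cite[Proposition 1]{CV22}); more precisely $O(\MCA)$ must contain an odd residue. I would pick the smallest $s\ge 1$ with $\delta(0,a^{s})$ odd — equivalently $\delta(0,a^{s-1})\in K$ and applying one more $a$ leaves $K$ — and then run a word of the form $a^{s}b^{t-1}a$: the final letter $a$ is what shifts parity class, the block $b^{t-1}$ aligns the relevant $H_{t}$-coset as in \cref{lem:A,lem:B} so that the total length is $s+t\le n$, and the initial $a^{s}$-block together with the structure of $\operatorname{dupl}$ and $\operatorname{excl}$ produces the required strict size increase $|R|>|K\oplus1|$. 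The bookkeeping on why $s+t\le n$ is the analogue of the length estimate in \cref{prop:1}: each coset-alignment costs at most $\frac{n}{|H_i|}-\frac{n}{|H_{i+1}|}$ and the telescoping sum is bounded by $n$.

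For \cref{prop:expand+1}\,(3), $S=K$ is again a union of $K$-cosets, and now one extra letter beyond the length-$n$ budget is allowed. The natural move is to first expand $K\oplus1$ by part (2) using a word of length $\le n$, obtaining $R$ with $\delta(R,w)=K\oplus 1$ and $|R|>|K\oplus1|=n/2$, and then prepend a single $b$: since $\delta(K\oplus1,\,\cdot\,)$ — wait, rather, one uses that $b$ is the cyclic shift $q\mapsto q\oplus1$, so $\delta^{-1}(K\oplus1,b)=K$ as sets, hence $\delta(R',b)=K\oplus1$ for $R'=\delta^{-1}(\cdot)$... the cleaner route is: $K\oplus1=\delta(K,b)$, so if $w'$ expands $K\oplus1$ with witness $R$ (so $\delta(R,w')=K\oplus1$, $|R|>n/2$), then $\delta(R,\,w'\,\text{?})$... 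Actually the right composition is to take a word $w''$ with $\delta(K,w'')=K\oplus1$ handled in reverse; I expect the clean statement is that $K$ is reached from some $R$ with $|R|>n/2$ by appending one letter to the word from part (2), giving total length $\le n+1$. Verifying that this one extra letter is both necessary and sufficient — i.e.\ that $K$ genuinely is not $n$-expandable in general but always $(n+1)$-expandable — is the part I expect to be the main obstacle, since it requires ruling out shorter expansions, presumably by an argument that any expanding word for a union of $K$-cosets must ``exit'' the coset structure at least once and re-align, costing a full $n$ plus the exit step. I would isolate that last point as the technical heart and prove it by tracking $m(\cdot)$ and $t(\cdot)$ exactly as in \cref{prop:1}, observing that when $S$ is a union of $K=H_{\ell-1}$-cosets the relevant indices are forced to the top of the chain, so the length estimate picks up the extra $+1$ and no less.
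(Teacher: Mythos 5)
Your part (1) matches the paper's argument: apply \cref{prop:expand} to a $K$-coset $C$ with $S \cap C \notin \{\emptyset, C\}$ and observe that a word expanding $S\cap C$ also expands $S$. Your part (3) also contains the right one-line idea, namely append a single $b$, since $\delta(K\oplus 1, b) = K \oplus 2 = K$; note, however, that the ``necessity'' of the extra letter, which you single out as the main obstacle, is not part of the claim and need not be proved.

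The genuine gap is in part (2), where your mechanism rests on a false premise. You assert that $K = 2\MBZ_n$ forces some $\delta(0,a^{s})$ to be odd, ``otherwise the orbit would stay inside $K$ and $K$ could not be all of $\MBZ_n$.'' But $K$ is by definition the subgroup generated by the orbit $O(\MCA) = \{\delta(0,a^i) : i \ge 1\}$, so under the hypothesis $K = 2\MBZ_n$ \emph{every} $\delta(0,a^i)$ is even; $K$ is not supposed to equal $\MBZ_n$ --- complete reachability is witnessed by the longer chain $H_0 \subsetneq H_1 \subsetneq \cdots \subsetneq H_{\ell} = \MBZ_n$, whose later terms are generated from images of whole subgroups under $a$, not merely from the orbit of $0$. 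Hence there is no ``parity-shifting'' application of $a$ of the kind you describe, and your construction of $a^sb^{t-1}a$ has no starting point. The paper's actual device is a pull-back through $a$: set $T = \delta^{-1}(K\oplus 1, a)$. Since $0 \notin K\oplus 1$ and $\dupl(a)=\{\delta(0,a)\}\subseteq K$, the set $T$ is an $a$-predecessor of $K\oplus 1$ of the same size $n/2$, and neither $T$ nor $\delta(T,b)$ is a union of $K$-cosets. Applying \cref{prop:expand} to $\delta(T,b)$ with $C = K\oplus 1$ yields a word $a^sb^t$ with $s+t\le n$ expanding $\delta(T,b)$, and $t \ge 1$ because a word $a^s$ can only expand a set that meets $O(\MCA)\subseteq K$, which a subset of $K\oplus 1$ cannot do. Deleting the final $b$ turns the expansion witness for $\delta(T,b)$ into one for $T$, and appending $a$ turns it into one for $\delta(T,a)=K\oplus 1$, giving the word $a^sb^{t-1}a$ of length $s+t\le n$. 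This detour through the non-coset set $T$ is the idea your proposal is missing, and part (3) then follows by appending $b$ as you suggested.
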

\begin{proof}

	\begin{enumerate}
		\item 	If $S$ is not a union of $K$-cosets, then there exists a $K$-coset $C$ such that $S \cap C \notin \{\emptyset, C\}$. Using \cref{prop:expand}, there exists a word $w$ of length at most $n$ that expands $S \cap C$. It is clear that $w$ also expands $S$. Hence $S$ is $n$-expandable.
		\item 	Let $T = \delta^{-1}(K \oplus 1, a)$. Since $0 \notin  K \oplus 1$, the subset $T$ is a predecessor of $ K \oplus 1$. It is clear that $\delta(T,b)$ is not a union of $K$-cosets. 
		Note that $(K \oplus 1) \cap \delta(T,b) \notin \{\emptyset, K \oplus 1\}$.
		Applying \cref{prop:expand} for $S = \delta(T,b)$ and $C = K \oplus 1$, there exists a word $w = a^sb^t$ that expands $(K \oplus 1) \cap \delta(T,b)$ and also expands $\delta(T,b)$.
		Note that $\delta(0,a) \notin K \oplus 1$.
		Since $w$ expands $(K \oplus 1) \cap \delta(T,b)$, we have $t \ge 1$. Hence, the word $a^sb^{t-1}a$ expands $K \oplus 1$.
		\item 	Since $\delta(K \oplus 1, b) = K$, $K$ is $(n+1)$-expandable.
	\end{enumerate}

\end{proof}

Let $\MCA = (\MBZ_n, \{a,b\}, \delta)$ be a standardized completely reachable automaton such that its orbit group $K = (2\MBZ_n, \oplus)$. Let $k$ be an integer such that $1 \le k \le n$.
Using \cref{prop:expand+1} repeatedly, one can deduce that every non-empty $k$-element subset of $\MBZ_n$ is reachable with a word of length at most $n(n-k) + 1$. In the rest of this section, we will improve this upper bound to $n(n-k)$.

\begin{lemma} \label{lem:lemon}
	Let $p,q$ be two distinct states in $\MBZ_n$.
	If $K = (2\MBZ_n, \oplus)$ and $|O(\MCA)| > 1$, then $\MBZ_n \setminus \{p,q\}$ is $(n-1)$-expandable.
\end{lemma}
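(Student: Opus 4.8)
I want to show that the $(n-1)$-element subset $\MBZ_n \setminus \{p,q\}$ has a predecessor of size $n$, i.e. that it is expanded by some word of length $\le n-1$. The natural idea is to use the letter $a$: since $\excl(a) = \{0\}$ and $\dupl(a) = \{\delta(0,a)\}$ (standardized), the map $a$ sends exactly two states to the same place and misses exactly $0$. So $\delta^{-1}(X, a)$ has the same size as $X$ unless $X$ contains the duplicate point $\delta(0,a)$ or fails to contain... — more precisely, $|\delta^{-1}(X,a)|$ exceeds $|X|$ exactly when $\delta(0,a) \in X$, in which case the preimage picks up both preimages of $\delta(0,a)$. So I would like to rotate $\MBZ_n \setminus \{p,q\}$ by a power of $b$ so that the resulting set contains $\delta(0,a)$ (easy: a set of size $n-1$ misses only one point, so almost every rotation works), and so that after taking the $a$-preimage the result is genuinely a set of size $n$ — but wait, $\delta^{-1}(X,a)$ can only reach size $n$ if $X = \MBZ_n \setminus \{\delta(0,a)\} \cup \{\text{something}\}$... let me reconsider.

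**The mechanism.** For $a$, the preimage of a set $X$ of size $n-1$: if $0 \notin X$ then $\delta^{-1}(X,a)$ is a predecessor of $X$ (the point $0$ is not needed). The size of $\delta^{-1}(X,a)$ is $n-1$ if $\delta(0,a) \notin X$ and is $n$ if $\delta(0,a) \in X$ — because then both preimages of $\delta(0,a)$ are included, and no other point is missed since $|X| = n-1$ means $X$ omits exactly one point, namely some $r \neq 0$ with a unique $a$-preimage. Actually if $\delta(0,a) \in X$, then $\delta^{-1}(X,a)$ omits only $0$ among... no: $\delta^{-1}(\MBZ_n, a) = \MBZ_n$ and $\delta^{-1}(r, a)$ is a single state for $r \neq \delta(0,a)$, so $\delta^{-1}(X, a) = \MBZ_n \setminus \delta^{-1}(r,a)$ where $r$ is the omitted point, hence size $n-1$ if $r \neq \delta(0,a)$. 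That contradicts what I said — so $a$-preimage alone never increases size for an $(n-1)$-set. The increase must come from combining with a $b$-shift in the other direction. So the correct plan: take $w = ab^i$. Then $\delta^{-1}(\MBZ_n \setminus \{p,q\}, ab^i) = \delta^{-1}\big(\delta^{-1}(\MBZ_n \setminus \{p,q\}, b^i), a\big)$, and $\delta^{-1}(\MBZ_n \setminus \{p,q\}, b^i) = \MBZ_n \setminus \{p \ominus i, q \ominus i\}$, a set of size $n-2$. Its $a$-preimage has size $n-1$ iff $\delta(0,a)$ lies in it (duplicate point inside $\Rightarrow$ one preimage pair counted once, net $+1$... no: size $n-2$ set, $a$-preimage is $\MBZ_n$ minus preimages of the two omitted points, which is $n-2$ unless one omitted point is $\delta(0,a)$, giving $n-1$; and we need $0 \in$ the $n-2$-set for it to be a predecessor, which holds as long as $0 \notin \{p\ominus i, q \ominus i\}$).

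**Carrying it out.** So: choose $i$ so that $p \ominus i = \delta(0,a)$ (this pins down $i$, using $|O(\MCA)| > 1$ to guarantee $\delta(0,a) \neq 0$, wait I need $\delta(0,a) \ne 0$ which holds since $\delta(0,a) \in \dupl(a)$ and $0 \in \excl(a)$ are forced distinct — actually $\delta(0,a)=0$ would be fine too). Then $\delta^{-1}(\MBZ_n \setminus \{p,q\}, b^i) = \MBZ_n \setminus \{\delta(0,a), q \ominus i\}$. Provided $0 \notin \{\delta(0,a), q\ominus i\}$, its $a$-preimage is a predecessor of size $n-1$ of $\MBZ_n \setminus \{p,q\}$, and $|ab^i| = i + 1$. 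The snag is bounding $i+1 \le n-1$: generically $0 \le i \le n-1$, which only gives length $\le n$. To shave it to $n-1$ I need a smarter choice, and this is where $|O(\MCA)| > 1$ and $K = 2\MBZ_n$ enter: I expect to use that $\delta(0,a^j)$ for varying $j$ gives several available "duplicate points", so I can instead pick $w = a^2 b^i$ or choose which of $p,q$ to align and in which rotational direction, keeping $i$ in the range $\{0, \dots, n-2\}$; the condition $0 \notin \{\cdot,\cdot\}$ may fail for one choice but then works for the alternative. Concretely, I would consider the two candidate shifts aligning $\delta(0,a)$ with $p$ versus with $q$, and argue at least one yields exponent $i \le n-2$ together with $0$ avoided. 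If both fail simultaneously, $p,q$ must be in a very special position (e.g. $\{p,q\} = \{\delta(0,a), \delta(0,a)\oplus(\text{something})\}$ with $0$ forced in), and then I would fall back on a length-$2$ word $a^2 b^i$ exploiting $|O(\MCA)|>1$ so that $\delta(0,a^2) \ne \delta(0,a)$, giving a fresh alignment target.

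**Main obstacle.** The real work is the length accounting — getting $n-1$ rather than $n$ — not the existence of a size-$n$ predecessor, which is essentially immediate from the structure of $a$. I expect the proof to split into a main case handled by a single well-chosen $w = ab^i$ with $i \le n-2$, and a short exceptional case (forced when the two natural choices both put $0$ among the omitted points) handled by using a second-generation orbit point via $w = a^2b^i$; the hypothesis $|O(\MCA)| > 1$ is exactly what makes this escape route available, and $K = 2\MBZ_n$ ensures the parity/coset bookkeeping stays clean.
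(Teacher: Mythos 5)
Your bookkeeping for $a$-preimages is inverted, and this breaks the core mechanism of the proposal. For $X = \MBZ_n \setminus \{r,s\}$ one has $|\delta^{-1}(X,a)| = n - |\delta^{-1}(r,a)| - |\delta^{-1}(s,a)|$; since $0$ is the \emph{excluded} state (no $a$-preimage) and $\delta(0,a)$ is the \emph{duplicated} one (two $a$-preimages), the preimage grows exactly when one omitted point is $0$ and the other is not $\delta(0,a)$. Your chosen alignment $p \ominus i = \delta(0,a)$ does the opposite: it deletes both preimages of $\delta(0,a)$, so $|\delta^{-1}(X,a)| = n-3$ (or $n-2$ if $q\ominus i = 0$), never an expansion. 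The accompanying predecessor condition is also backwards: $\delta(\delta^{-1}(X,a),a) = X \setminus \excl(a) = X \setminus \{0\}$, so one needs $0 \notin X$, i.e.\ $0 \in \{p\ominus i, q\ominus i\}$ — not ``$0 \notin \{\delta(0,a), q\ominus i\}$'' as you wrote. (You state the correct criterion in your first paragraph and then discard it.) As described, the ``carrying it out'' step produces no expansion at all, so there is a genuine gap, not merely missing case analysis.

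The paper's proof aligns the excluded state instead: with $p < q$, take $i = p$, so $T = \delta^{-1}(\MBZ_n\setminus\{p,q\}, b^p) = \MBZ_n \setminus \{0,\, q\ominus p\}$, and $ab^p$ (length $p+1 \le n-1$) expands unless $q \ominus p = \delta(0,a)$. In that exceptional case $a^2b^p$ works, because $|O(\MCA)|>1$ forces $\delta(0,a^2) \ne \delta(0,a)$ (and $\ne 0$), and $p+2 \le n-1$ whenever $p \le n-3$. The single remaining configuration $p = n-2$, $q = n-1$ gives $q \ominus p = 1$, which cannot equal $\delta(0,a) \in K = 2\MBZ_n$ by parity, so $ab^{n-2}$ suffices there. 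Your closing guess about the overall shape (main case $ab^i$, fallback $a^2b^i$ via $|O(\MCA)|>1$, parity via $K = 2\MBZ_n$) matches the paper, but the concrete alignment you propose to execute is the wrong one, so the argument does not go through as written.
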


\begin{proof}
	Without loss of generality, we assume that $p < q$.

	\medskip
	\noindent{\textbf{Case 1.} $p \le n-3$}

	Let $T = \delta^{-1}(\MBZ_n \setminus \{p,q\}, b^p)$. Note that $0 \notin T$ and $\{\delta(0,a), \delta(0,a^2)\} \cap T \neq \emptyset$. Then either $ab^p$ 
	or $a^2b^p$ expands $Q \setminus \{p,q\}$. Hence $\MBZ_n \setminus \{p,q\}$ is $(n-1)$-expandable, since $2+p \le n-1$.

	\medskip
	\noindent{\textbf{Case 2.} $p = n-2$}

	Note that $q = n-1$ and $\delta(0,a) \neq 1$. We have $\MBZ_n \setminus \{p,q\} = \delta(\{2,3,\ldots, n-1\},b^{n-2})$ and then $ab^{n-2}$ expands $\MBZ_n \setminus \{p,q\}$. Hence $\MBZ_n \setminus \{p,q\}$ is $(n-1)$-expandable.
\end{proof}

For a positive integer $n$ and an integer $a$, we denote the residue of $a$ modulo $n$ by $\overline{a}$.

\begin{lemma} \label{lem:extreme}
	Assume that $O(\MCA) = \{d\}$ and $K = (2\MBZ_n, \oplus)$.  Let $S \subseteq \MBZ_n$ such that $|S| > \frac{n}{2}$. If $S$ is not $(n-1)$-expandable, then $S = \MBZ_n \setminus \{\overline{n-1 - jd} : 0 \le j \le n - |S| - 1\}$.  
\end{lemma}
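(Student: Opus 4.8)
The plan is to prove the contrapositive in a more precise form: I will show that if $S$ has $|S| > \frac{n}{2}$ and $S$ is \emph{not} of the stated shape $\MBZ_n \setminus \{\overline{n-1-jd} : 0 \le j \le n-|S|-1\}$, then $S$ is $(n-1)$-expandable, i.e.\ there is a word $w$ with $|w| \le n-1$ and a set $R$ with $|R| > |S|$ and $\delta(R,w) = S$. Write $c = n - |S|$, so $0 \le c < \frac{n}{2}$ is the number of states excluded by $S$, and note $c \le \frac{n}{2}-1 \le |K|-1$. Since $O(\MCA) = \{d\}$, the action of $a$ is: it maps $0 \mapsto d$ and maps some fixed state (the unique element of $\dupl(a)$, which by standardization is $\delta(0,a) = d$; wait — I should be careful, $\dupl(a) = \{\delta(0,a)\} = \{d\}$) — so $a$ is a bijection on $\MBZ_n \setminus \{0\}$ onto $\MBZ_n \setminus \{d\}$ together with $0 \mapsto d$, i.e.\ $a$ identifies $0$ and $d^{-1}$... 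The cleaner description: $\delta^{-1}(q,a)$ has size $1$ for all $q \ne d$, size $2$ for $q = d$, and size $0$ for $q$ such that... actually $\excl(a) = \{0\}$ so every state except $0$ has a nonempty preimage; $d = \delta(0,a) \in \dupl(a)$ has a preimage of size two. So $a$ acts as a permutation on $\MBZ_n$ except that two states map to $d$ and nothing maps to $0$.

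**The main argument.**

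The basic expansion move is the one used in \cref{lem:lemon}: to expand $S$, pick a rotation $b^p$, form $T = \delta^{-1}(S, b^p)$, and observe that $0 \notin T$ precisely when $p \notin S \oplus$ (nothing) — more precisely $0 \in T$ iff $\delta(0,b^p) = p \in S$, so we need $p \notin S$, i.e.\ $p$ is one of the $c$ excluded states. Wait, that is backwards: we want $0 \notin T = \delta^{-1}(S,b^p)$, which means $\delta(0, b^p) = \overline{p} \notin S$; so $\overline{p}$ must be excluded. Then $\delta^{-1}(T,a)$ is strictly larger than $T$ (since $d \in T$ forces the size to go up), and $\delta(\delta^{-1}(T,a), ab^p) = S$. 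So the move "$S$ is $(n-1)$-expandable" succeeds whenever there is an excluded state $\overline{p}$ with $p \le n-2$ and $d = \delta(0,a) \in \delta^{-1}(S,b^p)$, i.e.\ $\overline{d - p} \in S$... hmm, let me recompute: $d \in \delta^{-1}(S,b^p)$ iff $\delta(d,b^p) = \overline{d+p} \in S$. And we also need $d \in T = \delta^{-1}(S,b^p)$ to guarantee the preimage under $a$ is strictly larger — that is the same condition $\overline{d+p} \in S$. So: $S$ is expandable via $ab^p$ of length $1 + p \le n-1$ whenever there is a $p$ with $0 \le p \le n-2$, $\overline{p} \notin S$, and $\overline{p+d} \in S$. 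More generally, using $a^2 b^p$, $a^3 b^p$, etc., one gets variants, but the key combinatorial claim is:

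\medskip\noindent\textbf{Claim.} If $S$ is not $(n-1)$-expandable, then for every excluded state $\overline{p}$ with $\overline{p} \ne n-1$, the "next" state $\overline{p+d}$ is also excluded (and in fact one must iterate). Unwinding: let $n-1$ be excluded (one shows $n-1 \in \MBZ_n \setminus S$ must hold, else using $p$ close to $n-1$ gives an expansion with room to spare, essentially Case 2 of \cref{lem:lemon}); then $\overline{n-1-d}$ must be excluded, then $\overline{n-1-2d}$, and so on, giving exactly $\{\overline{n-1-jd} : 0 \le j \le c-1\} \subseteq \MBZ_n \setminus S$; since this set has $c$ distinct elements (distinctness uses $c \le |K|-1$ and that $d$ generates $K$, so $jd$ for $0 \le j \le c-1$ are distinct) and $|\MBZ_n \setminus S| = c$, equality holds.

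**Where the difficulty lies.**

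The routine part is the expansion move and the length bookkeeping ($1+p \le n-1$ needs $p \le n-2$, and the $a^2$-variant needs $2+p \le n-1$, i.e.\ $p \le n-3$, which is why the boundary cases $p \in \{n-2, n-1\}$ need separate handling exactly as in \cref{lem:lemon}). The main obstacle is the \emph{propagation} argument: showing that non-expandability forces the excluded set to be the precise arithmetic-progression-like set $\{\overline{n-1-jd}\}$ rather than merely "closed under $q \mapsto \overline{q+d}$". I expect to handle this by a downward induction: assume $S$ is not $(n-1)$-expandable; first pin down that $n-1 \notin S$ by checking that if $n-1 \in S$ then some excluded state $\overline{p}$ (which exists since $c \ge 1$; if $c = 0$ then $S = \MBZ_n$ and the statement is vacuous or trivial) can be rotated to position $\le n-2$ to produce an expansion — this uses $|S| > n/2$ to guarantee $\overline{p+d} \in S$ for a suitable choice, because $S$ being more than half of $\MBZ_n$ and the excluded set being "thin" means a neighbor under $+d$ lands in $S$ unless forced otherwise. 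Then, having fixed the excluded state $n-1$, apply the expansion move with $p$ chosen so that $\overline{p} = \overline{n-1-d}$: non-expandability forces $\overline{n-1-d} \notin S$; iterate. At each stage one must check the length constraint is met, which again is where $|S| > n/2$ (equivalently $c < n/2$) is used, so that the progression $\overline{n-1}, \overline{n-1-d}, \dots$ does not "wrap around" and the rotations stay in the admissible range $\{0,1,\dots,n-2\}$. Carrying out this induction cleanly, with the boundary cases, is the real content; everything else is the machinery already developed in \cref{lem:lemon} and \cref{prop:expand+1}.
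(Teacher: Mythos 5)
Your expansion move and your ``Claim'' are exactly the engine of the paper's proof: for an excluded state $\overline{p}\ne n-1$ (with representative $p\le n-2$), if $\overline{p\oplus d}\in S$ then $T=\delta^{-1}(S,b^{p})$ satisfies $0\notin T$ and $d\in T$, so $ab^{p}$ expands $S$ with length $p+1\le n-1$; hence non-expandability forces $\overline{p\oplus d}\notin S$. The genuine gap is in how you then extract the structure of $\MBZ_n\setminus S$. You propose to iterate \emph{backward} from $n-1$: ``apply the expansion move with $\overline{p}=\overline{n-1-d}$: non-expandability forces $\overline{n-1-d}\notin S$,'' and then continue to $\overline{n-1-2d}$, etc. This deduction fails. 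The move at $p$ yields an expansion only when $\overline{p}\notin S$ \emph{and} $\overline{p\oplus d}\in S$; with $\overline{p}=\overline{n-1-d}$ you have $\overline{p\oplus d}=n-1\notin S$, so the move produces nothing whether or not $\overline{n-1-d}$ lies in $S$, and non-expandability gives no information about that state. The same defect recurs at every step of your proposed iteration, so the induction you outline in the last paragraph does not get off the ground as stated. (It is also why you find yourself needing a separate, vaguely justified argument that $n-1\notin S$ in the first place.)

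The repair is to orient the propagation \emph{forward}, which is what the paper does and which your own Claim already supports: take an arbitrary $x\notin S$ and let $j$ be least with $\overline{x+(j+1)d}\in S$. Such $j$ exists because $|S|>\frac n2$ forces $S$ to meet the $K$-coset of $x$, and $j\le n-|S|-1$ because $x,\overline{x+d},\dots,\overline{x+jd}$ are $j+1$ distinct excluded states. Then $ab^{\overline{x+jd}}$ expands $S$, so non-expandability forces $\overline{x+jd}=n-1$, i.e.\ $x=\overline{n-1-jd}$ with $0\le j\le n-|S|-1$; comparing cardinalities gives equality of the two sets. This single forward computation per excluded state replaces your induction entirely and makes the boundary cases you anticipate (the special role of $n-1$, the wrap-around of rotations) disappear: $n-1$ is not an input to the argument but the forced output of it.
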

\begin{proof}

	Since $|S| = |\MBZ_n \setminus \{\overline{n-1 - jd} : 0 \le j \le n - |S| - 1\}|$, it is sufficient to prove $\MBZ_n \setminus S \subseteq \{\overline{n-1 - jd} : 0 \le j \le n - |S| - 1\}$.

	Take an arbitrary state $x \notin S$. Assume $x \in K \oplus i$, where $i \in \{0,1\}$. Since $O(\MCA) = \{d\}$, we have $K \oplus i = \{\overline{x + td} : 0 \le t \le \frac{n}{2}\}$. Since $|S| > \frac{n}{2}$, $S \cap (K \oplus i) \neq \emptyset$.
	Let $j$ be the least non-negative integer such that $\overline{x + (j+1)d} \in S$. It is clear that $j \le n - |S| -1$.
	Let $T = \delta^{-1}(S, b^{\overline{x + jd}})$. Note that $0 \notin T$ and $d \in T$. And then $ab^{\overline{x+jd}}$ expands $S$. Since $S$ is not $(n-1)$-expandable, this implies $\overline{x + jd} = n - 1$ which is equivalent to $x \in \{\overline{n-1 - jd} : 0 \le j \le n - |S| - 1\}$.
\end{proof}

\begin{lemma}\label{lem:acnes}
	Assume that $O(\MCA) = \{d\}$, $K = (2\MBZ_n, \oplus)$ and $n \ge 10$. There exists a word $w$ that reaches $K$ such that $|w| \le \frac{1}{2}n^2$.  
\end{lemma}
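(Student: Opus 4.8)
The plan is to reach $K = (2\MBZ_n,\oplus)$ by starting from $\MBZ_n$ and repeatedly applying expansion steps in reverse, carefully controlling the cost of each step so the total telescopes to at most $\tfrac12 n^2$. Here $O(\MCA)=\{d\}$ forces the letter $a$ to act in a very rigid way: $\delta^{-1}(q,a)$ is $q$ for $q\neq d$ and $q\neq 0$, a doubleton for $q=d$, and empty for $q=0$ (after standardization), so each word $a^s b^t$ is cheap to analyse. First I would handle the last two steps: by \cref{prop:expand+1}(3) we can go from a predecessor of size $\tfrac n2+1$ to $K$ using the single letter $b$ after a word $a^s b^{t-1}a$ of length $\le n$ that expands $K\oplus 1$ (part (2)); so the final two expansions cost at most $n+1$ and it remains to reach \emph{some} $(\tfrac n2+1)$-element set $S$ with $K\oplus 1 \subseteq S$ cheaply.

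Next I would peel off states one at a time, going from $\MBZ_n$ down to size $\tfrac n2+1$. For sets $S$ of size $|S|>\tfrac n2$ that are \emph{not} of the exceptional form in \cref{lem:extreme}, a single letter-$a$ expansion of the form $ab^t$ with $t = \overline{x+jd}$ works, and since $j \le n-|S|-1$ we may bound $t$ (hence the step cost $t+1$) by something like $n-|S|-1+1 = n-|S|$ in the worst case — but I would instead bound it by the first index where $S$ meets the coset, which for a generic $S$ is small. The delicate point is the exceptional sets $S = \MBZ_n \setminus \{\overline{n-1-jd} : 0\le j\le n-|S|-1\}$ from \cref{lem:extreme}: for these the naive one-letter expansion fails. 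I would show that these exceptional sets form a single decreasing chain $S_0 \supsetneq S_1 \supsetneq \cdots$ (removing the states $\overline{n-1}, \overline{n-1-d}, \overline{n-1-2d}, \dots$ in order), so the \emph{reachability} of the target exceptional set of size $\tfrac n2+1$ can be organized as: reach it directly by one word rather than by many one-letter steps. Concretely, I expect one can write down an explicit short word (of the form $a^s b^t a^{s'} b^{t'}\cdots$, a bounded number of blocks) reaching this particular set, using that $b$ rotates and $a$ merges only at $d$; its length should be $O(n)$.

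So the structure of the total bound is: $n-s-2$ peeling steps each of cost at most roughly $\tfrac{n}{2}$ (since at each stage we can choose to expand whichever state gives the cheapest $ab^t$, and for non-exceptional $S$ the cheapest such $t$ is at most about $\tfrac n2$ because each $K$-coset has $\tfrac n2$ elements and $S$ meets both cosets), plus the $O(n)$ for the final exceptional segment, plus $n+1$ for the last two steps. This gives roughly $\tfrac12 n\cdot n + O(n) \le \tfrac12 n^2$ once $n\ge 10$. The main obstacle I anticipate is the exceptional-set analysis: one must prove that whenever peeling gets \emph{stuck} at an exceptional set $S$, there is still a cheap way forward — either a two-letter word $a\,b^{t_1}\,a\,b^{t_2}$ that expands $S$ with total length $O(n)$, or a direct jump. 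The cleanest route is probably: note that from the exceptional set $S$ of size $m>\tfrac n2$, applying $b^{-1}$ (i.e. taking the $b$-predecessor, which is a bijection and costs $1$ in reverse) shifts everything so that the "missing block" $\{\overline{n-1-jd}\}$ becomes a block starting at $0$, and then an $a$-step can expand it because $0\in\excl(a)$; iterating, one reaches $K\oplus i$ in at most $n$ letters, after which one more $b$ gives $K$. Carrying out this case analysis, and checking the arithmetic with $\gcd(d,n)$ in mind (since $O(\MCA)=\{d\}$ and $K=2\MBZ_n$ forces $d$ odd and $\gcd(d,n/\gcd(d,n))$ constraints), is where the real work lies; the rest is bookkeeping that telescopes as in the proof of \cref{thm:upperBound}.
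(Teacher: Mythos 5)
Your framework---build the reaching word for $K$ backwards as a chain of expansions $\MBZ_n \to S_1 \to \cdots \to K\oplus 1 \to K$, each step of length at most $n$, with the exceptional sets of \cref{lem:extreme} as the only obstruction---matches the skeleton of the paper's argument, but the paper runs it as a proof by contradiction: if $|w| \ge \tfrac12 n^2 + 1$, then every inequality $m_i+t_i\le n$ in the chain must be tight, which by \cref{lem:extreme} forces \emph{every} intermediate set to be the exceptional set $\MBZ_n\setminus\{\overline{n-1-jd}: 0\le j\le i-1\}$; an induction (\cref{claim: arc}) then pins down $\delta(\cdot,a)$ on the missing states, and a parity count (using that $K=2\MBZ_n$ forces $d$ to be \emph{even}, not odd as you assert) yields the contradiction. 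Your proposal stops exactly where that work begins.

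Two concrete gaps. First, the quantitative claim that a non-exceptional $S$ with $|S|>\tfrac n2$ admits an expansion $ab^t$ with $t$ of order $\tfrac n2$ is false: an expansion $ab^t$ works precisely when $t\notin S$ and $\overline{t+d}\in S$, so for $S=\MBZ_n\setminus\{n-2\}$ (which is not the exceptional set) the only admissible value is $t=n-2$ and the step costs $n-1$; moreover you cannot ``choose whichever state gives the cheapest $ab^t$,'' since the predecessor $R=\delta^{-1}(S,ab^t)$ is then determined and may itself be badly placed. \cref{lem:extreme} gives only $(n-1)$-expandability, and with $n-1$ per non-exceptional step the budget $\tfrac12 n^2$ closes only if at most $\tfrac n2-2$ of the $\tfrac n2-1$ intermediate sets are exceptional---so the entire lemma reduces to refuting the configuration in which essentially all of them are exceptional, which is precisely the case you defer. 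Second, your sketched treatment of that case is not an argument: the missing set $\{\overline{n-1-jd}\}$ is an arithmetic progression with common difference $d$, not a contiguous block, so rotating by powers of $b$ never turns it into ``a block starting at $0$''; an $a$-step expands a rotated copy $S\ominus t$ only when $0\notin S\ominus t$ and $d\in S\ominus t$, and for the exceptional sets this forces $t=n-1$, i.e.\ cost exactly $n$---that is the content of \cref{lem:extreme}, not a way around it. The paper escapes not by exhibiting a cheaper word but by showing the all-exceptional chain cannot exist in any automaton with $O(\MCA)=\{d\}$ and $K=2\MBZ_n$: it would force $\delta(\overline{n-1-(i-1)d},a)=\overline{-id}$ for all $i$, which is incompatible on parity grounds with $\delta(S_{\frac n2-1},a)=T'$. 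Some such structural input is indispensable, and your proposal does not supply it.
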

\begin{proof}
	Let $w$ be a shortest word that reaches $K$. Since $K \nsubseteq \delta(\MBZ_n,a)$, we have $w = w'b$ and $\delta (\MBZ_n, w') = K \oplus 1$. It is sufficient to prove $|w| \le \frac{1}{2} n^2$.

	Assume, for a contradiction, that $|w| \ge \frac{1}{2} n^2 + 1$.  By applying \cref{prop:expand+1}, we have 
	\[
			\xymatrix{
				K \oplus 1 & T \ar[l]_{a}  & S_{\frac{n}{2}-1} \ar[l]_{a^mb^{t-1}}  & \cdots \ar[l]_{a^{m_{\frac{n}{2}-1}}b^{t_{\frac{n}{2}-1}}} & S_{1} \ar[l]_{a^{m_2}b^{t_2}} & \MBZ_n \ar[l]_{a^{m_1}b^{t_1}} \\
			}
	\]
	such that $|S_i| = n - i$ for all $i \in \{1, 2, \ldots, \frac{n}{2}-1\}$ and
	\begin{align} 
		m + t &\le n \label{eq:upper1}\\
		m_i+t_i &\le n \label{eq:upper2}
	\end{align}
	for all $i \in \{1,2,\ldots, \frac{n}{2}-1\}$. 
	Recall that $w$ is a shortest word such that $\delta(\MBZ_n, w) = K$. Then
	\begin{equation} \label{eq:word}
		|a^{m_1}b^{t_1} a^{m_2}b^{t_2} \cdots a^{\frac{n}{2}-1}b^{t_{\frac{n}{2} - 1}}a^mb^{t-1}a| = (m+t) + \sum_{i=1}^{\frac{n}{2}-1} (m_i + t_i)  \ge |w| \ge n\frac{n}{2} + 1
	\end{equation}
	Due to $O(\MCA) = \{d\}$, we have $m = 1$ and $m_i = 1$ for all $i \in \{1, 2, \ldots, \frac{n}{2}-1\}$.
	Then inequalities \cref{eq:upper1,eq:upper2,eq:word} only hold when $t = n -2$ and $t_i = n-1$ for all  $i \in \{1,2,\ldots, \frac{n}{2}-1\}$. 

	Observe that $K$ is not $n$-expandable and $S_i$ is not $(n-1)$-expandable for $i \in \{1, \ldots, \frac{n}{2}-1\}$ (Otherwise, by \cref{prop:expand}, we can construct a word $u$ of length at most $\frac{1}{2}n^2-1$ such that $\delta(\MBZ_n,u) = K \oplus 1$). 
	By \cref{lem:extreme}, $S_i = \MBZ_n \setminus \{\overline{n-1 - jd} : 0 \le j \le i-1\}$ for all $i \in \{1,2,\ldots, \frac{n}{2}-1\}$.

	Inducting on $i$, we will establish the following:
	\begin{claim}\label{claim: arc}
		For each $i \in \{1,2, \ldots, \frac{n}{2}-2\}$, $\delta(\overline{n-1-(i-1)d},a) = \overline{-id}$.
	\end{claim}
	\begin{proof}[Proof of \cref{claim: arc}]
		Define $T_i$ to be the set $\delta(S_i,a)$, for all $i \in \{1,2,\ldots, \frac{n}{2}-1\}$. Since $\delta(T_i,b^{n-1}) = S_{i+1}$, we have $T_i = \MBZ_n \setminus \{0, \overline{-d}, \ldots, \overline{-id}\}$. 
		Write $Q' = \MBZ_n \setminus \{0\}$ and $S'_i = S_i \setminus \{0\}$  for all $i \in \{1,2,\ldots, \frac{n}{2}-1\}$.
		Since $a$ acts on $Q'$ as a permutation, the letter $a$ sends $S'_i$ to $T_i$ as a bijection for all $i \in \{1,2,\ldots, \frac{n}{2}-1\}$.
		And then $a$ sends $Q' \setminus S_i$ to $Q' \setminus T_i$ as a bijection. 

		For $i = 1$, consider that $a$ sends $Q' \setminus S_1  = \{n-1\} $ to $Q' \setminus T_1 = \{n-d\}$ as a bijection.
		That is $\delta(\overline{n-1}, a) = \overline{-d}$.

		For $i \ge 2$, consider that $a$ sends $Q' \setminus S_i = \{\overline{n-1 - (j-1)d)} : 1 \le j \le i\}$ to $Q' \setminus T_i = \{\overline{-jd} : 1 \le j \le i\}$ as a bijection. By induction hypothesis, we have $\delta(\overline{n-1 - (j-1)d}, a) = \overline{-jd}$ for each $j \in \{1, \ldots, i -1\}$. Then $\delta(\overline{n-1 - (i-1)d},a) = \overline{-id}$.
	\end{proof}

	By \cref{claim: arc}, $\overline{n-1 - (i-1)d} \notin T$ for all $j \in \{1, \ldots, \frac{n}{2}-2\}$. Observe that $0,d \notin T$. Then $T = \{\overline{d-1}, \overline{2d-1}\} \cup \{\overline{jd} : 2 \le j \le \frac{n}{2} - 1\}$. 
	Observe that $ab^d$ expands $T$ and then $ab^dab$ expands $K$. Since $K$ is not $n$-expandable, it holds that $d \ge n-2$.
	Recall that $K = (2\MBZ_n, \oplus)$ and $\mathcal{O}(\MCA) = \{d\}$. Hence $d$ is an even number and $d = n-2 = t - 1$.
	Let $T'$ be the set such that $\delta(T',b^{t-1}) = T$. One can calculate that $T' = \{n-1, d-1\} \cup\{\overline{jd} : 1 \le j \le \frac{n}{2} -2\}$. 
	Since $S_{\frac{n}{2}-1}$ has only one odd number $d-1$ and $T'$ has at least two even numbers besides $d$, by \cref{claim: arc}, this is a contradiction for $\delta(S_{\frac{n}{2}-1},a) = T'$. 
\end{proof}

\begin{proof}[Proof of \cref{thm:Don}]
	In the case that $|O(\MCA)| > 1$, the claim of the theorem follows from \cref{prop:expand+1,lem:lemon}.
	In the case that $|O(\MCA)| = 1$ and $n \ge 10$, it follows from \cref{prop:expand+1,lem:acnes}.
	In the case that $|O(\MCA)| = 1$ and $n \le 8$, it is proved by enumeration, see \cref{table:1} below. 
\end{proof}

\section{DFAs do not fulfill Don's conjecture} \label{sec:doNotfulfill}

Using naive breadth-first search, we have verified Don's conjecture for binary completely reachable automata and standardized completely reachable automata with at most $10$ states. The number of non-isomorphic DFAs is presented in \cref{table:1}.
Let $\mathcal{B}_8$, shown in \cref{fig:8state}, be one of $8$-state binary completely reachable DFAs that do not fulfill Don's conjecture. 

\begin{figure}
\centering
\begin{tikzpicture}[->,>={Stealth[round]},shorten >=1pt,
                    auto,node distance=2cm,on grid,semithick,
                    inner sep=2pt,bend angle=45]
  \node[state] (A)              {$0$};
  \node[state] (B) [right=of A] {$1$};
  \node[state] (C) [right=of B] {$3$};
  \node[state] (D) [right=of C] {$7$};
  \node[state] (E) [right=of D] {$4$};
  \node[state] (F) [right=of E] {$6$};
  \node[state] (G) [below=of F] {$2$};
  \node[state] (H) [left=of G]  {$5$};

  \path [every node/.style={font=\footnotesize}]
        (A) edge (B)
        (B) edge (C)
        (C) edge (D)
        (D) edge (E)
        (E) edge (F)
        (F) edge (G)
        (G) edge (H)
        (H) edge (E);
\end{tikzpicture}
\caption{The action of $a$ of the $8$-state DFA $\mathcal{B}_8$. The shortest word that reaches $\{1,2,3,5,6,7\}$ is the word $a^2b^5ab^5ab^3$.}
\label{fig:8state}
\end{figure}
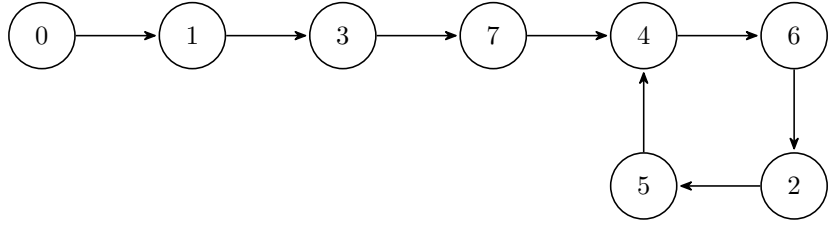

\begin{table}[htbp]
\caption{The number of non-isomorphic automata that do not fulfull Don's conjecture.}
\label{table:1}
\begin{center}
\begin{tabular}{lrrrr}
      \hline\hline
      the number of states & $\le 7$ & $8$ & $9$ & $10$ \\\hline
      standardized completely reachable  & $0$ & $0$ & $0$ & $0$\\
      binary completely reachable & $0$ & $68$ & $0$ & $9210$ \\\hline
\end{tabular}
\end{center}
\end{table}

\begin{example} \label{ex:counterexample}
Let $n \ge 10$ be an even positive integer.
Let $\MCA_n = (\mathbb{Z}_n,\{a,b\},\delta)$ be the circular automaton such that 
\begin{itemize}
	\item $\delta(q,b) = q \oplus 1$, for every $q \in \mathbb{Z}_n$;
	\item $\delta(0,a) = n-3, \delta(n-3,a) = 1, \delta(1,a) = n-2, \delta(n-2,a) = n/2, \delta(n/2,a) = n-4, \delta(n-4,a) = n-1, \delta(n-1,a) = n/2$;
	\item $\delta(q,a) = q$, for every $q \in \mathbb{Z}_n \setminus \{0, 1, n/2, n-1, n-2, n-3, n-4\}$.
\end{itemize}
The action of $a$ is shown in \cref{fig:non-Don}.
In the rest of this section, we will show that the automaton $\MCA_n$ is completely reachable (\cref{lem:CR}) and does not fulfill Don's conjecture (\cref{lem:SRW}).
\end{example}

\begin{figure}
\centering
\begin{tikzpicture}[->,>={Stealth[round]},shorten >=1pt,
                    auto,node distance=2cm,on grid,semithick,
                    inner sep=2pt,bend angle=45]
  \node[state] (A)              {$0$};
  \node[state] (B) [right=of A] {$n-3$};
  \node[state] (C) [right=of B] {$1$};
  \node[state] (D) [right=of C] {$n-2$};
  \node[state] (E) [right=of D] {$n/2$};
  \node[state] (F) [right=of E] {$n-4$};
  \node[state] (G) [right=of F] {$n-1$};

  \path [every node/.style={font=\footnotesize}]
        (A) edge              (B)
        (B) edge              (C)
        (C) edge              (D)
        (D) edge              (E)
        (E) edge              (F)
        (F) edge              (G)
        (G) edge [bend left]  (E);
\end{tikzpicture}
\caption{The action of $a$ in the DFA $\MCA_n$. The states that are fixed by $a$ are omitted.}
\label{fig:non-Don}
\end{figure}
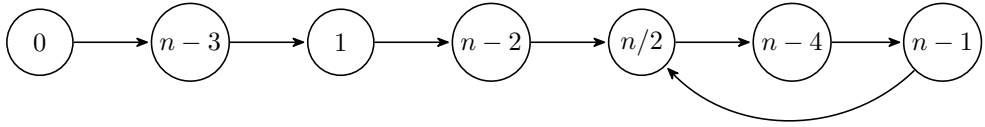

For an automaton $(Q,\Sigma,\delta)$, a non-empty subset $W$ of $Q$ is called a \emph{witness} if it is unreachable and has the maximal size of all unreachable subsets of $\MBZ_n$. This concept is introduced by Ferens and Szykuła \cite[Defnition 4]{FS23}. The following lemma is based on an important observation on witnesses \cite[Corollary 6]{FS23}: the union of two distinct witnesses is the whole state set.

\begin{lemma} \label{lem:CR}
	For each positive even integer $n$, the automaton $\MCA_n$ is completely reachable.
\end{lemma}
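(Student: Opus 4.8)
The plan is to verify complete reachability of $\MCA_n$ by exhibiting reaching words for every non-empty subset, using the witness machinery of Ferens--Szyku\l a: it suffices to show that no unreachable subset can have size $n-1$, equivalently that every $(n-1)$-element subset $\MBZ_n \setminus \{p\}$ is reachable, and then to argue that if any subset were unreachable, the witnesses (maximal unreachable subsets) would have size $\le n-2$ and any two distinct witnesses would union to $\MBZ_n$, which we will contradict. So the first concrete step is: for each $p \in \MBZ_n$, produce a word $w_p$ with $\delta(\MBZ_n, w_p) = \MBZ_n \setminus \{p\}$. Since $b$ is a full cycle, it is enough to reach one singleton-complement, say $\MBZ_n \setminus \{0\}$; but $\delta(\MBZ_n, a) = \MBZ_n \setminus \{0\}$ is not true here because $\excl(a)$ must be recomputed from the definition of $\delta$ in \cref{ex:counterexample} --- in fact $a$ collapses $n-1$ and $n-2$ both onto $n/2$, and misses $n-4$ (nothing maps to $n-4$ under $a$ since $\delta(n/2,a)=n-4$... wait, $n/2 \mapsto n-4$, so $n-4 \in \im(a)$; the missed state is the one with empty preimage). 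I would carefully identify $\excl(a) = \{x\}$ from the cycle structure in \cref{fig:non-Don}, conclude $\delta(\MBZ_n,a) = \MBZ_n \setminus \{x\}$, and then use powers of $b$ to rotate this to every $\MBZ_n \setminus \{p\}$.

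Next I would climb down from co-size $1$ to arbitrary subsets. The standard argument: complete reachability is equivalent to showing every subset of size $k$, for $k$ from $n-1$ down to $1$, is reachable, and for this one shows each such subset has a predecessor of size $k+1$ (already known reachable) --- i.e. every proper non-empty $S$ is expandable. Using the action of $a$, $\delta^{-1}(S,a)$ has size $|S|+1$ precisely when $S$ contains $\delta(0,a) = n-3$ and does not contain... no: $|\delta^{-1}(S,a)| = |S| + |S \cap \dupl(a)| - |S \cap \excl(a)|$-type bookkeeping. The clean statement is that $a b^i$ expands $S$ whenever the rotated copy $S \ominus i$ is a proper subset avoiding $\excl(a)$ but meeting $\dupl(a) = \{n/2\}$ appropriately; since $b$ is a full cycle we have lots of freedom in choosing $i$, and I expect that for every proper non-empty $S$ some rotation works, giving a word of length $O(n)$ and hence complete reachability. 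This is where I would invoke the witness lemma as a shortcut: it is enough to rule out a witness $W$ with $|W| = n-2$, because a single unreachable subset forces a witness, and the co-size-$1$ computation already kills co-size $1$; then two distinct witnesses of size $n-2$ cannot union to $\MBZ_n$ unless they differ in exactly... so in fact if there is any unreachable set there must be at least two witnesses (translates under $b$), and their pairwise unions being $\MBZ_n$ pins down their complements to be pairwise disjoint singletons or pairs, which a short case analysis on the specific $\delta$ excludes.

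The main obstacle I anticipate is the bookkeeping in the expandability step: because $\MCA_n$ is \emph{not} standardized (the two states mapping onto $n/2$ are $n-1$ and $n-2$, which are not both $\delta(0,a)$ --- indeed $\delta(0,a)=n-3\neq n/2$), none of the structural results \cref{prop:expand,prop:expand+1,coro:CV23+} about standardized DFAs apply, so the argument must be done by hand on the explicit transition table. Concretely, one must check that for each target size $k$ there is no "bad" set that resists all rotations $ab^i$ --- equivalently, that the orbit structure generated by the non-trivial part of $a$ (the $7$ moved states $\{0,1,n/2,n-1,n-2,n-3,n-4\}$, with the small cycle $n/2 \to n-4 \to n-1 \to n/2$ and the path $0 \to n-3 \to 1 \to n-2 \to n/2$) is rich enough. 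I would organize this as: (i) compute $\excl(a)$ and $\dupl(a)$ explicitly; (ii) observe $\MBZ_n \setminus \{\excl(a)\}$ is reached by $a$ and rotate; (iii) show every proper non-empty $S$ has some rotation $S \ominus i$ that is a proper subset missing $\excl(a)$ and containing $n/2$, so that $ab^i$ or a short variant expands it --- the only potential failure is when $S$ is forced to always "cover" $\excl(a)$ under every rotation, impossible for $|S| \le n-1$; and (iv) if a cleaner route is desired, invoke \cite[Corollary 6]{FS23} and finish with the two-witnesses contradiction. The verification is elementary but genuinely case-based on $n \bmod$ small numbers (since $n/2$, $n-2$, $n-3$, $n-4$ interact), so I expect the write-up to be a page of careful but unsurprising casework.
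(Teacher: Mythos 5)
Your overall strategy --- settle co-size $1$ directly and then rule out a witness via \cite[Corollary 6]{FS23} --- is the same as the paper's, but the step you expect to carry the load contains a genuine error. A word $ab^i$ expands $S$ if and only if $i \notin S$ and $i \oplus \frac{n}{2} \in S$ (the rotated set must avoid $\excl(a)=\{0\}$ \emph{and} meet $\dupl(a)=\{\frac{n}{2}\}$). You assert that ``the only potential failure is when $S$ is forced to always cover $\excl(a)$ under every rotation, impossible for $|S| \le n-1$'', but this ignores the second condition: the actual failure mode is that $S$ is closed under adding $\frac{n}{2}$, i.e.\ $s \in S \iff s \oplus \frac{n}{2} \in S$. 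Such sets exist in every even cardinality (for instance $S = \{1,\, 1 \oplus \frac{n}{2}\}$), and none of them is expanded by any word of the form $ab^i$. This $\frac{n}{2}$-periodicity of a putative witness is precisely the \emph{first} observation in the paper's proof; it is where the real argument begins, not a degenerate case that disappears.

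Your fallback in step (iv) does not close this gap as written. A witness need not have size $n-2$ (it is an unreachable set of \emph{maximal} size, which could be much smaller once co-size $1$ is handled), so ``ruling out a witness with $|W|=n-2$'' is not sufficient; and the complements of two distinct witnesses are merely pairwise disjoint, not ``singletons or pairs''. The concrete contradiction the paper extracts from \cite[Corollary 6]{FS23} is this: starting from a witness $S$, which must be $\frac{n}{2}$-periodic and can be normalized so that $0, \frac{n}{2} \notin S$, one exhibits two distinct witnesses (the predecessors of $S$ under $b$ and under $ab$) and uses the specific local transitions $\delta(n-2,a) = \delta(n-1,a) = \frac{n}{2}$ to show that the state $n-1$ lies in neither of them, so their union is not $\MBZ_n$. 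Your proposal never engages with this local structure of $a$ around $n-1$ and $n-2$, and the ``short case analysis'' you defer to is exactly the missing content of the proof.
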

\begin{proof}
	Assume $\MCA_n$ is not completely reachable. Take $S$ to be a witness of $\MCA$.
	Observe that $s \in S$ if and only if $s \oplus \frac{n}{2} \in S$, otherwise, $b^sa$ expands $S$. 
	Since $\delta(S,a^i)$ is also a witness of $\MCA$ for every nonnegative integer $i$, we can assume that $0,n/2 \notin S$.
	Let $T$ and $T'$ be the subsets of $\mathbb{Z}_n$ such that $\delta(T,b) = S$ and $\delta(T',ab)=S$. 
	Note that $T$ and $T'$ are witnesses and $n-1 \notin T \cup T'$ which contradicts \cite[Corollary 6]{FS23}.
\end{proof}

Let $\Pi$ be the set of words $\{ab^i : i \ge 0 \}$. For $S,S' \subseteq \mathbb{Z}_n$, the subset $S$ is called a \emph{$\Pi$-predecessor} of $S'$ if $S$ is a $w$-predecessor of $S'$ for some word $w \in \Pi$. The following lemma contains some elementary facts of $\Pi$-predecessors in $\MCA_n$.

\begin{lemma} \label{lem:predecessor}
	In $\MCA_n$, the following statements hold.
	\begin{enumerate}
		\item \label{itm:1} $\mathbb{Z}_n \setminus \{n-1,n-2\}$ is the unique $\Pi$-predecessor of $\mathbb{Z}_n \setminus \{\frac{n}{2}-1, n-1\}$.
		\item \label{itm:2} If $\mathbb{Z}_n \setminus X$ is  a $\Pi$-predecessors of $\mathbb{Z}_n \setminus \{n-1,n-2\}$, then the minimum element of $X$ is either $n-3$ or $n-4$
		\item \label{itm:3} Let $x$ be the minimum element in a non-emptyset propoer subset $X \subseteq \MBZ_n$. The length of a word that reaches $\MBZ_n / X$ is at least $x+1$. 
	\end{enumerate} 
\end{lemma}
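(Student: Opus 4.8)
The plan is to treat the three parts largely independently, since each is a finite computation about the action of $a$ in $\MCA_n$. Throughout I would keep in mind the structure of $\delta(\cdot,a)$: on the "active" cycle $0 \to n-3 \to 1 \to n-2 \to n/2 \to n-4 \to n-1 \to n/2$ together with the identity on all other states, and the fact that $\delta(\cdot,b)$ is the shift $q \mapsto q \oplus 1$. For a word $ab^i \in \Pi$, a $w$-predecessor $R$ of a target $P$ must satisfy $\delta(R,a) = \delta^{-1}(P,b^i) = P \ominus i$ and $0 \notin \delta^{-1}(P,b^i)$, i.e.\ $i \notin P$; moreover $R = \delta^{-1}(P \ominus i, a)$ is forced (there is no choice once $i$ is fixed), so enumerating $\Pi$-predecessors amounts to enumerating the admissible shifts $i$.

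For part \cref{itm:1}: I would take the target $P = \MBZ_n \setminus \{n/2 - 1, n-1\}$ and ask for which $i$ the set $R = \delta^{-1}(P \ominus i, a)$ is a genuine predecessor, i.e.\ $|\delta(R,a)| = |R|$, equivalently $\delta(0,a) = n-3 \in P \ominus i$ and the two "missing" states of $P \ominus i$, namely $\overline{n/2-1-i}$ and $\overline{n-1-i}$, are not both outside the range of $a$ restricted appropriately. The key point is that $\delta^{-1}(\cdot,a)$ of a set of size $n-2$ has size $n-2$ only when exactly one of the deleted states lies in $\dupl(a) = \{n-3\}$ and neither is the excluded value in a way that forces a collision; working this out pins $i$ to a single value and yields $R = \MBZ_n \setminus \{n-1, n-2\}$. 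For part \cref{itm:2}, I would similarly set $P = \MBZ_n \setminus \{n-1,n-2\}$, write $P \ominus i = \MBZ_n \setminus \{\overline{n-1-i}, \overline{n-2-i}\}$, and compute $\delta^{-1}(P \ominus i, a)$; the complement $X$ of this preimage is controlled by where the pair $\{\overline{n-1-i},\overline{n-2-i}\}$ sits relative to the active cycle, and a short case analysis on $i$ shows the minimum of $X$ is always $n-3$ or $n-4$ (these arise from the in-neighbours of $n-2$ and of $n/2$ on the cycle). Part \cref{itm:3} is the cleanest: by Lemma \cref{lem:predecessor}\,\cref{itm:3}'s own hypothesis, any reaching word $w$ of $\MBZ_n \setminus X$ can be written as a product of blocks $a b^{i_1}, a b^{i_2}, \ldots$; since $x \notin \MBZ_n \setminus X$ but $x$ must be in the image until the very last $a$ is applied, and each $a$-application can remove at most the single state $0$ from being a preimage, one shows by a predecessor-chain argument (peeling off letters from the right, tracking that $0$ must be avoided $x$ times going backwards through the shifts) that $|w| \ge x + 1$. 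Concretely: a reaching word of any proper subset must end in $a$, and tracing back, each intermediate set of the form $\delta(\MBZ_n, \text{prefix})$ contains $\{0,1,\dots,x\}$-image data forcing at least $x$ further letters.

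The main obstacle is bookkeeping rather than conceptual: in parts \cref{itm:1} and \cref{itm:2} one must carefully handle the arithmetic of residues modulo $n$ (the positions $n-1,n-2,n-3,n-4,n/2,1,0$ and their shifts) and verify that the candidate preimage really has the right cardinality, which hinges on correctly identifying $\excl(a) = \{0\}$ and $\dupl(a) = \{n-3\}$ — note $\MCA_n$ need not be standardized, so $\dupl(a) \ne \{\delta(0,a)\}$ in general and this must be computed directly from the definition in \cref{ex:counterexample}. I expect part \cref{itm:3} to go through with a clean induction on $|w|$ (or on $x$), using only that removing preimages at a single $a$-step costs one state and that shifts by $b$ do not change cardinality; the delicate step there is ensuring that the "$+1$" is not lost, which comes from the final $a$ itself contributing a letter beyond the $x$ shifts needed to position state $x$ at $0$.
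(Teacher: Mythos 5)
Your overall plan is the same as the paper's (split a reaching word into blocks $ab^{i}$, use the fact that $0$ has no $a$-preimage to constrain the shift $i$, then compute $a$-preimages), but the computational facts you build parts \cref{itm:1} and \cref{itm:2} on are wrong, and they are exactly the facts that make the lemma work. First, $\dupl(a)\neq\{n-3\}$: the only state of $\MCA_n$ with two $a$-preimages is $n/2$, since $\delta(n-2,a)=\delta(n-1,a)=n/2$, whereas $n-3$ has the single preimage $0$. Part \cref{itm:1} hinges on this: the admissible shifts are $i\equiv n/2-1$ or $i\equiv n-1 \pmod n$ (so $i$ is \emph{not} pinned to a single value), both of which turn the target into $\MBZ_n\setminus\{0,n/2\}$, and the full $a$-preimage of that set is $\MBZ_n\setminus\{n-2,n-1\}$, of size exactly $n-2$ precisely because the second deleted state is the duplicate state $n/2$; with $\dupl(a)=\{n-3\}$ the cardinality count comes out wrong and uniqueness does not follow. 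Second, your existence criterion is stated backwards: for some $R$ with $\delta(R,a)=\delta^{-1}(P,b^{i})$ to exist, one of the two deleted states of $\delta^{-1}(P,b^{i})$ \emph{must} equal the excluded state $0$, not ``neither''. Third, in part \cref{itm:2} the predecessor $R$ is \emph{not} forced to be the full preimage: for the shift $i\equiv n-2$ one gets $\delta^{-1}(\MBZ_n\setminus\{0,1\},a)=\MBZ_n\setminus\{n-3\}$, of size $n-1$, so $R$ may additionally omit one of the two preimages $n-2,n-1$ of $n/2$; the paper handles this by listing the six possible sets $X$ and checking that each has minimum $n-3$ or $n-4$. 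Your sketch skips this case split, and with the misidentified $\dupl(a)$ it could not be carried out correctly.

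For part \cref{itm:3} your conclusion is right but the stated mechanism is off. A reaching word of a proper subset need not end in $a$ (it typically ends in $b^{i}$); it need only contain $a$. The paper's one-step argument is: write the suffix after the last occurrence of $a$ as $b^{i}$; since $0$ is not in the image of $a$, the set obtained just before this suffix omits $0$, hence $\MBZ_n\setminus X$ omits $\overline{i}$, so $\overline{i}\in X$ and $i\geq x$; adding the single letter $a$ gives length at least $x+1$. No induction, no ``peeling off $x$ times'', and no claim about each $a$ ``removing at most one state'' is needed, and the assertion that the word must end in $a$ is false as stated.
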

\begin{proof}
	Let $S = \mathbb{Z}_n \setminus \{\frac{n}{2}-1, n-1\}$ and $T = \mathbb{Z}_n \setminus \{n-1,n-2\}$.
	\begin{enumerate}
		\item Let $R$ be a $\Pi$-predecessor of $S$. Let $ab^i$ be a word such that $\delta(R, ab^i)=S$. Then $\delta(R,a) = \delta^{-1}(S,b^i)$ do not contain states $0$ and $n/2$. Hence, $R = \mathbb{Z}_n \setminus \{n-1, n-2\}$ as wanted.
		\item Let $R$ be a $\Pi$-predecessor of $T$. Let $ab^i$ be a word such that $\delta(R, ab^i)=S$. Let $P = \delta(R,a) = \delta^{-1}(S,b^i)$. Either $\MBZ_n \setminus P = \{0,1\}$ or $\MBZ_n \setminus \{0,n-1\}$.	Since $\delta(\{n-1,n-2\},a) = n/2$, the set $\MBZ_n \setminus R$ equals 
		one of the six subsets: $\{n-3\}$, $\{n-3, n-1\}$, $\{n-3, n-2\}$, $\{n-4\}$, $\{n-4,n-1\}$ and $\{n-4, n-2\}$.
		\item Let $R$ be an $ab^i$-predecessor of $\MBZ_n \setminus X$ for some integer $i$. It is clear that $0 \notin \delta (R,a) = \delta^{-1}(\MBZ_n \setminus X, b^i)$. Then $i \ge x$. Since $X \neq \emptyset$, every word that reaches $\MBZ_n \setminus X$ contains the letter $a$.
		Hence, the length of a word that reaches $\MBZ_n \setminus X$ is at least $i + 1 \ge x+1$.
	\end{enumerate}
\end{proof}

\begin{lemma} \label{lem:SRW}
	The length of a word that reaches $\mathbb{Z}_n \setminus \{n-1,\frac{n}{2}-1\}$ in $\MCA$ is at least $\frac{5}{2}n-3$.
\end{lemma}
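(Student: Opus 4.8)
The plan is to peel the reaching word apart from its right end, using \cref{lem:predecessor} to track $\Pi$-predecessors. Fix any word $w$ with $\delta(\MBZ_n,w)=S$, where $S:=\MBZ_n\setminus\{\tfrac{n}{2}-1,\,n-1\}$, and write it in the canonical form $w=b^{j_0}ab^{j_1}a\cdots ab^{j_k}$ with $k\ge 0$ and $j_0,\dots,j_k\ge 0$, so that $|w|=j_0+k+\sum_{i=1}^{k}j_i$. Put $R_0:=\delta(\MBZ_n,b^{j_0})=\MBZ_n$ and $R_i:=\delta(R_{i-1},ab^{j_i})$, giving a chain $\MBZ_n=R_0,R_1,\dots,R_k=S$ in which each $R_{i-1}$ is a $\Pi$-predecessor of $R_i$ and $|R_0|\ge|R_1|\ge\cdots\ge|R_k|$. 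First I would dispose of small $k$: for $k=0$ the image is $\MBZ_n$; for $k=1$ it is $\MBZ_n\setminus\{j_1\}$, of size $n-1\ne|S|$; and for $k=2$ the set $R_1=\MBZ_n\setminus\{j_1\}$ of size $n-1$ would be a $\Pi$-predecessor of $S$, contradicting the uniqueness in the first item of \cref{lem:predecessor} (its unique $\Pi$-predecessor $\MBZ_n\setminus\{n-1,n-2\}$ has size $n-2$). Hence $k\ge 3$.

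Next I would identify the last two sets of the chain. By the first item of \cref{lem:predecessor}, $R_{k-1}=T:=\MBZ_n\setminus\{n-1,n-2\}$. A direct computation gives $\delta(T,a)=\MBZ_n\setminus\{0,\tfrac{n}{2}\}$, since $n-1$ and $n-2$ are the only states $a$ maps to $\tfrac{n}{2}$; hence $\delta(T,ab^{j_k})=\MBZ_n\setminus\{j_k,\,j_k\oplus\tfrac{n}{2}\}$, and matching this with $S$ (note $n-1=(\tfrac{n}{2}-1)\oplus\tfrac{n}{2}$) forces $j_k\equiv\tfrac{n}{2}-1$ or $j_k\equiv n-1\pmod n$, so in particular $j_k\ge\tfrac{n}{2}-1$. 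By the second item of \cref{lem:predecessor}, $R_{k-2}=\MBZ_n\setminus X$ with $\min X\in\{n-3,n-4\}$; here I would re-run that argument to record one extra fact about $P:=\delta(R_{k-2},a)=\delta^{-1}(T,b^{j_{k-1}})$: the requirement $0\notin P$ forces $P=\MBZ_n\setminus\{0,n-1\}$ (if $j_{k-1}\equiv n-1$) or $P=\MBZ_n\setminus\{0,1\}$ (if $j_{k-1}\equiv n-2$), and examining the preimages of $a$ shows the first case corresponds exactly to $\min X=n-4$ and the second to $\min X=n-3$. Consequently $j_{k-1}\ge n-1$ when $\min X=n-4$ and $j_{k-1}\ge n-2$ when $\min X=n-3$.

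Finally I would sum the lengths. Since $k\ge 3$, the set $R_{k-2}=\MBZ_n\setminus X$ is non-empty and proper, so the prefix $b^{j_0}ab^{j_1}\cdots ab^{j_{k-2}}$, which reaches $R_{k-2}$, has length at least $\min X+1$ by the third item of \cref{lem:predecessor}. Therefore
\[
  |w|\ \ge\ (\min X+1)+(1+j_{k-1})+(1+j_k).
\]
If $\min X=n-3$ the right-hand side is at least $(n-2)+(n-1)+\tfrac{n}{2}=\tfrac{5}{2}n-3$, and if $\min X=n-4$ it is at least $(n-3)+n+\tfrac{n}{2}=\tfrac{5}{2}n-3$. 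Either way $|w|\ge\tfrac{5}{2}n-3$, which is the assertion.

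The main obstacle is precisely the coupling used in the second paragraph: the crude estimates $\min X+1\ge n-3$, $j_{k-1}\ge n-2$, $j_k\ge\tfrac{n}{2}-1$ only deliver $\tfrac{5}{2}n-4$, so one genuinely needs that the value of $\min X$ determines the residue of $j_{k-1}$ (equivalently, that $\delta(R_{k-2},a)$ is one of the two specific sets $\MBZ_n\setminus\{0,n-1\}$ or $\MBZ_n\setminus\{0,1\}$ according to $\min X$). Establishing this rests on the exact local structure of the action of $a$ around the states $n-1,n-2,n-3,n-4$ in $\MCA_n$, and that bookkeeping is the delicate part of the argument; everything else is routine.
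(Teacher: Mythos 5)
Your proof is correct and follows essentially the same route as the paper: decompose the reaching word into $\Pi$-factors, identify the last two predecessors via \cref{lem:predecessor}, and exploit the coupling between $\min X\in\{n-3,n-4\}$ and the forced value of $j_{k-1}$ before summing the three contributions. Your handling of the degenerate cases $k\le 2$ and of the leading $b^{j_0}$ block is slightly more explicit than the paper's, but the substance of the argument is identical.
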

\begin{proof}
	Let $w$ be a shortest word that reaches $T$. Then $w$ can be decomposed into several words in $\Pi$, that is, $w = w_1 w_2 \cdots w_t$ for some integer $t$ and $w_1, \ldots, w_t \in \Pi$. 
	Write $S$ for $\mathbb{Z}_n \setminus \{n-1,\frac{n}{2}-1\}$. Let $T = \delta^{-1}(S, w_t)$ and $R = \delta^{-1}(T, w_{t-1})$
	By \cref{lem:predecessor}~\cref{itm:1}, we have $T = \mathbb{Z}_n \setminus \{n-1,n-2\}$ and then  $w_t = ab^i$ where $i \ge \frac{n}{2}-1$.
	By \cref{lem:predecessor}~\cref{itm:2}, we have $\min(R)\in \{n-3,n-4\}$. The rest of the proof is divided into two cases.

	\medskip
	\noindent{\textbf{Case 1.} $\min(R) = n-3$}

	Since $n-3 \notin R$, $0,1 \notin \delta(R,a)$. Then $w_{t-1} = ab^j$ such that $j \ge n-2$.
	Using \cref{lem:predecessor}~\cref{itm:3}, $|w_1 \cdots w_{t-2}| \ge n-3$. Hence $w \ge (1+n-3) + (1+ n-2) + (1 + \frac{n}{2}-1) = \frac{5}{2}n- 3$.

	\medskip
	\noindent{\textbf{Case 2.} $\min(R) = n-4$}

	Since $n-4 \notin R$, $0,n-1 \notin \delta(R,a)$. Then $w_{t-1} = ab^j$ such that $j \ge n-1$.
	Using \cref{lem:predecessor}~\cref{itm:3}, $|w_1 \cdots w_{t-2}| \ge n-4$. Hence $w \ge (1+n-4) + (1+ n-1) + (1 + \frac{n}{2}-1) = \frac{5}{2}n- 3$.

\end{proof}

\begin{proof}[Proof of \cref{thm:non-Don}]
	Let $n$ be any integer at least $10$.
	Since $\frac{5}{2}n- 3 > 2n$, by \cref{lem:CR,lem:SRW}, $\MCA_n$ is a binary completely reachable automaton and $\MCA_n$ does not fulfill Don's conjecture.
\end{proof}

Let $\MCA = (\MBZ_n, \{a,b\},\delta)$ be a binary completely reachable DFA. 
For a positive integer $k$, set $\MCA^{k}$ be the binary DFA $(\MBZ_n, \{a_k, b\}, \delta_k)$ such that $\delta_k(q, a_k) = \delta(q, b^ka)$ and $\delta_k(q,b) = \delta(q,b)$ for every $q \in \MBZ_n$. One can easily show that $\MCA^k$ is compeltely reachable.
Let $q_1$ and $q_2$ be the two distinct elements in $\dupl(\MCA)$. Both $\MCA^{q_1}$ and $\MCA^{q_2}$ are standardized completely reachable DFA and each of these two DFAs is called the \emph{standardization} of $\MCA$.

\begin{remark}
	The standardizations of $\MCA_n$ fulfill the Don's conjecture, for every even integer $n \ge 10$.
\end{remark} 
\begin{proof}
	The orbits of the standardizations of $\MCA_n$ are $\{n/2, n/2-2\}$ and $\{n/2, n/2 - 1\}$.
	Their orbit subgroups are $(2\MBZ_n,\oplus)$ and $(\MBZ_n,\oplus)$. By \cref{thm:Don}, the standardizations of $\MCA_n$ fulfill the Don's conjecture.
\end{proof}

\section{Conclusion} \label{sec:conclusion}

We have confirmed Don's conjecture for a subfamily of standardized completely reachable DFAs. Moreover, we prove that in an $n$-state standardized completely reachable DFA, every
$s$-element subset of states has a reaching word of length at most $n(n-s) + n - 1$. On the other hand, we contruct infinitely many binary completely reachable DFAs which do not fulfill Don's conjecture.

For a given DFA $\MCA$ and a subset $S$ of states, to determine whether they fulfill Don's conjecture can be seen as a ``qualitative analysis'' of the length of a shortest word that reaches $S$ in $\MCA$. More importantly and more difficult is its ``quantitative analysis''. We will do some brief discussions. 

For an $n$-state completely reachable automaton $\MCA = (Q, \Sigma, \delta)$ and a subset $S$ of its state set, let $f_{\MCA} (S)$ be the length of a shortest word that reaches $S$.
Let $\mathsf{C}$ be a subclass of completely reachable DFAs. For positive integers $n$ and $s$, define 
\[
	f_{\mathsf{C}}(n,s) := \max_{\MCA} \max_S \{ f_{\MCA}(S)\}
\]
where $\MCA$ runs over all $n$-state DFA in $\mathsf{C}$ and $S$ runs over all $s$-element subset of states in $\MCA$.
The restriction of Don's conjecture to $\mathsf{C}$ can be stated as $f_{\mathsf{C}}(n,s) \le n(n-s)$ for all $0 < s \le n$.

Write $\mathsf{BCR}$ for the class of binary completely reachable DFAs and $\mathsf{Std}$ for the class of standardized completely reachable DFAs.
\cref{thm:upperBound} shows that 
\[
	f_{\mathsf{Std}}(n,s) \le n(n-s) + n -1
\]
for all $0 < s \le n$. 
In \cref{sec:doNotfulfill}, by constructing $\MCA_n$ in \cref{ex:counterexample}, we obtain that 
\[
	f_{\mathsf{BCR}}(n,n-2) \ge \frac{5}{2}n - 3
\] 
for $n \ge 10$. Unfortunately, our construction cannot give any non-trivial lower bound for  $f_{\mathsf{BCR}}(n,s)$ when $s \neq n-2$.

Finally, we present a relation between the reaching words in a binary completely reachable DFA and its standradization. 
Let $\MCA = (\MBZ_n, \{a,b\}, \delta)$ be a binary completely reachable DFA and let $\MCA^k$ be a standardization of $\MCA$. 
Let $S$ be an $s$-element non-empty proper subset of $\MBZ_n$.
In the proof of \cref{thm:upperBound}, we construct a word $w \in \{a_k, b\}^*$ of length $\le (n-s)n + n-1$ such that $w$ reaches $S$ in $\MCA^k$. 
Further, $w$ contains at most $2n-s-1$ occurrences of $a_k$.
Let $w'$ be the word over $\{a,b\}$ which is obtained by replacing $a_k$ in $w$ with $b^ka$. Since $\delta_k(\cdot, a_k) = \delta(\cdot, b^ka)$, it holds that $\delta_k(Q, w) = \delta(Q, w') = S$. And then we have 
\begin{equation} \label{eq:standard}
	f_{\MCA} (S) \le (n-s)n + n -1 + k(2n-s -1).
\end{equation}
Recall that Ferens-Szyku{\l}a  bound \cite[Theorem 30]{FS23} shows that
\begin{equation} \label{eq:FS}
	f_{\MCA} (S) \le 2(n - s)n - n \ln(n - s) - n/(n- s).
\end{equation}
Observe that for any positive real $\epsilon$, in the case that $k < (\frac{1}{2} - \epsilon)n$,  \cref{eq:standard} provides an asymptotically  better bound of $f_{\MCA} (S)$ than \cref{eq:FS}. 

\section*{Acknowledgement}

I thank Prof. Mikhail V. Volkov for valuable discussions, feedback and research suggestions.

\end{document}